\documentclass[twocolumn,aps,prr,groupedaddress,superscriptaddress,amsmath,amssymb,showpacs,noeprint]{revtex4-2}

\usepackage{graphicx}
\usepackage{dcolumn}
\usepackage{bm}
\usepackage[dvipsnames]{xcolor}
\usepackage{hyperref}
\hypersetup{colorlinks,linkcolor={MidnightBlue},citecolor={MidnightBlue},urlcolor={MidnightBlue}}  

\usepackage{physics}
\usepackage{amsfonts}
\usepackage{verbatim}
\usepackage{amsmath}
\usepackage{csquotes}
\usepackage{color}
\usepackage{soul}
\usepackage{amsthm}
\usepackage{thmtools}

\newcommand{\mq}[2]{\abs{\ip{#1}{#2}}^2}
\newcommand{\prt}[1]{\left(#1\right)}
\newcommand{\prtq}[1]{\left[#1\right]}
\newcommand{\prtg}[1]{\left\{#1\right\} }
\newcommand{\pdvb}{\partial_\beta}
\newcommand{\pdvbg}{\partial_{\beta_g}}
\newcommand{\pdvbp}{\partial_{\beta_e}}
\newcommand{\pdve}{\partial_\E}

\newcommand{\betag}{\beta_g}
\newcommand{\betae}{\beta_e}

\newcommand{\prtn}[1]{(#1)}
\newcommand{\prtqn}[1]{[#1]}
\newcommand{\prtgn}[1]{\{#1\} }

\newcommand{\ad}{a^\dagger}
\newcommand{\bd}{b^\dagger}

\newcommand{\eH}{\mathcal{H}}
\newcommand{\betagp}{\beta'_g}

\declaretheorem{theorem}
\declaretheorem[sibling=theorem]{lemma}

\newenvironment{equations}
{\begin{equation}\begin{aligned}}
{\end{aligned}\end{equation}\ignorespacesafterend}

\newcommand{\stMin}{\ket{\psi_g}}
\newcommand{\Zmin}{Z_g}
\newcommand{\Emin}{E_g}
\newcommand{\stMax}{\ket{\psi_e}}
\newcommand{\Zmax}{Z_e}
\newcommand{\Emax}{E_e}
\newcommand{\E}{\mathcal{E}}
\newcommand{\Sys}{A\textrm{-}B}
\newcommand{\lambdavec}{{\vec{\lambda}}}
\newcommand{\degen}{d_g}
\newcommand{\degenE}{d_e}
\newcommand{\sig}{\tilde{\sigma}_g}
\newcommand{\lambf}{\vec{\lambda}}

\newcommand{\evn}[2]{\langle #2 | #1| #2 \rangle}
\newcommand{\ketn}[1]{| #1 \rangle}
\newcommand{\dyadn}[1]{| #1 \rangle \langle #1| }

\begin{document}

\title{Energy bounds for entangled states}

\author{Nicol\`o Piccione}
\email{nicolo.piccione@univ-fcomte.fr}
\affiliation{Institut UTINAM, CNRS UMR 6213, Universit\'{e} Bourgogne Franche-Comt\'{e}, Observatoire des Sciences de l'Univers THETA, 41 bis avenue de l'Observatoire, F-25010 Besan\c{c}on, France}

\author{Benedetto Militello}
\affiliation{Universit\`a degli Studi di Palermo, Dipartimento di Fisica e Chimica - Emilio Segr\`{e}, via Archirafi 36, I-90123 Palermo, Italy}
\affiliation{INFN Sezione di Catania, via Santa Sofia 64, I-95123 Catania, Italy}

\author{Anna Napoli}
\affiliation{Universit\`a degli Studi di Palermo, Dipartimento di Fisica e Chimica - Emilio Segr\`{e}, via Archirafi 36, I-90123 Palermo, Italy}
\affiliation{INFN Sezione di Catania, via Santa Sofia 64, I-95123 Catania, Italy}

\author{Bruno Bellomo}
\affiliation{Institut UTINAM, CNRS UMR 6213, Universit\'{e} Bourgogne Franche-Comt\'{e}, Observatoire des Sciences de l'Univers THETA, 41 bis avenue de l'Observatoire, F-25010 Besan\c{c}on, France}

\begin{abstract}

We find the minimum and the maximum value for the local energy of an arbitrary finite bipartite system for any given amount of entanglement, also identifying families of states reaching these bounds and sharing formal analogies with thermal states.
Then, we numerically study the probability of randomly generating pure states close to these energy bounds finding, in all the considered configurations, that it is extremely low except for the two-qubit and highly degenerate cases.
These results can be important in quantum technologies to design energetically more efficient protocols.

\end{abstract}

\maketitle

\textit{Introduction.} Energy and entanglement are two fundamental quantities in physics.
The concept of energy has been of great importance in the
development of physics~\cite{BookFeynman1965} while entanglement is one of the most, if not the most, exotic feature of quantum mechanics~\cite{Einstein1935}.
Therefore, it has been extensively studied since its conception
both from the theoretical and the experimental points of view~\cite{Horodecki2009}, also in connection with nonlocality~\cite{, Bell1964,Freedman1972,Aspect1982,Hensen2015}
and measurements~\cite{BookLandau1981,BookBraginsky1992},
even nondemolitive ones~\cite{Brune1990}.
Entanglement also plays a fundamental role in the development of quantum technologies~\cite{BookNielsen2010} and is considered as a resource in several contexts such as quantum teleportation~\cite{Bennett1993,Bouwmeester1997,Furusawa1998}, quantum cryptography~\cite{Ekert1991,Bennett1992,Gisin2002}, quantum communication~\cite{Duan2001}, quantum computation~\cite{BookNielsen2010}, quantum energy teleportation~\cite{Hotta2014} and in protocols exploiting repeated measurements~\cite{Nakazato2003,Militello2004,Militello2007,Bellomo2009,Bellomo2010}.
As a result, the quest for entanglement generation protocols has been one of the most flourishing fields in recent physics literature~\cite{Yung2005,Fulconis2007,Lin2013,Bellomo2015,Bellomo2017}.

Although quantum algorithms typically make use of two-level systems (qubits) \cite{BookNielsen2010}, it has been shown that $d$-level systems (qudits) may be more powerful for information processing \cite{Bechmann-Pasquinucci2000,Bullock2005}.
Indeed, the higher dimensionality allows for information coding with increased density, leading to a simplification of the design of circuits \cite{Lanyon2009}, since the number of logic gates is reduced. The realization of high-dimensional systems and their control has thus attracted much attention \cite{Neves2005,Kues2017}.

Understanding how energy and entanglement are connected can be crucial in order to design quantum technologies in a more efficient manner~\cite{Chiribella2017}.
In this context, some works investigated the energy cost of generating or extracting entanglement~\cite{Galve2009, Beny2018}.
In particular, some entanglement extraction protocols can be optimized by finding a minimum energy pure state with an assigned entanglement~\cite{Beny2018}. However, this has been done for interacting systems and the explicit solution has been found only for a specific toy model.

In this Rapid Communication, we investigate for an arbitrary finite bipartite system the connection between local energy and entanglement
in the case of discrete local Hamiltonians.
In particular, for any given amount of entanglement, we look for the range of possible values for the local energy and search for quantum states that reach, respectively, the lower and the upper bounds on the local energy.
Moreover, we numerically study the probability of randomly generating pure states close to these energy bounds.
This analysis can be helpful to design energetically efficient entanglement generation protocols. Since the dimensions of the bipartite system are arbitrary, our analysis naturally applies to protocols exploiting qudits.

\textit{Definition of the problem.} We consider a bipartite system $\Sys$ composed of two arbitrary quantum systems $A$ and $B$, with local Hamiltonian $H=H_A+ H_B$, where  $N_A$ and $N_B$ are the dimensions of, respectively,  $H_A$ and $H_B$, being $N_A \leq N_B$.
$H_A$ and $H_B$ can be written as:
\begin{equation}\label{eq:Hamiltonian}
H_X=\sum_{n=0}^{N_X-1} X_n\dyad{X_n},   \quad    X=A ,B ,
\end{equation}
where $ X_0\le X_1\le \dots \le X_{N_X-1}$.

The above local Hamiltonian $H$ suitably describes systems at the start and at the end of most quantum protocols, in which the possible interaction between the subsystems takes place only during the protocol.

We will first consider the case of pure states. In order to quantify the degree of entanglement of a pure  state $\ketn{\psi}$ of  system $\Sys$, we use the entropy of entanglement, which is regarded as the standard entanglement measure for pure states~\cite{Vidal2000,Plenio2007} and is equal to the Von Neumann entropy of one of the reduced states, i.e., $\E (\ketn{\psi})= S(\Tr_{A(B)} \{\dyadn{\psi}\})$,
where $S(\rho)=-\Tr \{\rho \ln \rho\}$.

Every pure state of system $\Sys$ can be rewritten according to its Schmidt decomposition as~\cite{BookNielsen2010}:
\begin{equation}
\label{eq: Schmidt decomposition}
\ket{\psi}=\sum_{i=0}^{N_A-1}\sqrt{\lambda_i}\ket{a_i b_i},
\end{equation}
where
$\ip{a_i}{a_j} =\ip{b_i}{b_j}=\delta_{ij}
 $,  $\sum_{i=0}^{N_A-1} \lambda_i=1$,
and $0\le \sqrt{\lambda_{N_A-1}}\le \dots \le \sqrt{\lambda_1}\le \sqrt{\lambda_0}\le 1$.
Accordingly,  $\E\prt{\ketn{\psi}}= - \sum_i \lambda_i \ln \lambda_i.$

{\it Minimum energy and corresponding states.} For each value of entanglement, $\E$, multiple sets of squared Schmidt coefficients such that the correct amount of entanglement is attained can be found.
Therefore, let us concentrate on one of these sets, $\lambf\equiv\{\lambda_i\}_{i=0}^{N_A-1}$.
In Appendix~\ref{sec: Lowest energy state for a given set of Schmidt coefficients}, we prove Theorem~1, showing that no pure state with the corresponding Schmidt coefficients can have less energy than the state
\begin{equation}
\label{eq: minimal energy state for a given set}
\ket{\psi_{\lambf}} = \sum_{i=0}^{N_A-1} \sqrt{\lambda_i} \ket{A_i B_i},
\end{equation}
having energy
\begin{equation}
E_{\lambf} \equiv \ev{H}{\psi_{\lambf}} = \sum_{i=0}^{N_A-1} \lambda_i E_i, \qquad E_i=A_i+B_i.
\end{equation}
To minimize $E_{\lambf}$ by varying $\lambf$, we use the following bijection (valid up to phase factors on the kets $\ketn{A_i B_i}$):
\begin{equation}
	\label{eq: bijection}
	\ket{\psi_{\lambf}}=\sum_{i=0}^{N_A-1}\sqrt{\lambda_i}\ket{A_i B_i} \leftrightarrow
	\tilde{\rho}_{\lambf}=\sum_{i=0}^{N_A-1} \lambda_i \dyad{A_i B_i},
\end{equation}
from which we get $\E(\ketn{\psi_{\lambf}}) =S(\tilde{\rho}_{\lambf})$.
Moreover, after introducing
\begin{equation}
 \tilde{H} =\sum_{i=0}^{N_A-1} E_i\dyad{A_i B_i},
\end{equation}
we can express the average energy in terms of the density operator $\tilde{\rho}_{\lambf}$ because $\evn{H}{\psi_{\lambf}} =\Tr \{\tilde{H}\tilde{\rho}_{\lambf} \}$.
Thus, the problem of minimizing $E_{\lambf}$ with respect to $\lambf$ for a given degree of entanglement $\E$ is equivalent to finding the diagonal density matrix $\tilde{\rho}_g$ that minimizes energy when its entropy $S=\E$ is fixed.
In Appendix~\ref{sec: Lowest energy state for a given entanglement}, we show that, if  $\E>\ln\degen$ where $\degen\ge 1$ is the number of $\tilde{H}$  eigenstates with lowest energy ($E_{\degen-1}=\dots = E_1=E_0$), the density matrix we search is the thermal state
\begin{equation}
\label{eq: Minimum energy density matrix}
\tilde{\rho}_g = \frac{e^{-\betag \tilde{H}}}{\Zmin}, \qquad \Zmin=\Tr \{e^{-\betag \tilde{H}}\},
\end{equation}
with respect to the fictitious Hamiltonian $\tilde{H}$ and inverse temperature $\betag$ obtained as the positive solution of
\begin{equation}
\label{eq: Thermal solution}
\prt{-\betag \pdv{\betag} + 1}\ln \Zmin = \E.
\end{equation}
In view of Eq.~\eqref{eq: bijection}, this density operator corresponds (up to phase factors) to a minimum energy state given by:
\begin{equation}
\label{eq: minimal energy entangled state}
\stMin=\frac{1}{\sqrt{\Zmin}}\sum_{i=0}^{N_A-1} e^{-\frac{\betag}{2} \prt{A_i + B_i}} \ket{A_i B_i}.
\end{equation}
Its energy can be easily calculated as $\Emin = - \pdvbg \ln{\Zmin}$.
We stress that Eq.~\eqref{eq: Thermal solution} can be solved numerically in a straightforward way and that, in the two-qubit case, analytical expressions can be found.

We observe that the state of Eq.~\eqref{eq: minimal energy entangled state} is not the unique state with minimal energy. Every other state that can be reached from it through the application of local and energy-conserving unitary operators fulfills this request (see Appendix~\ref{sec: Lowest energy state for a given entanglement} for more details).

To conclude our analysis we consider the case $\E \leq \ln \degen$.
In such a situation, the minimum energy is $E_0$ and a minimum energy pure state can be searched in the ground-energy eigenspace so that the problem is trivial.

It is worth stressing that our treatment is valid for every finite $N_A$ and $N_B$, even immensely large.
Therefore, on a physical ground, we conjecture that our analysis holds good even for discrete Hilbert spaces of infinite dimensions, as in the  case of two harmonic oscillators.

\textit{Maximum energy and corresponding states.} The result can be easily obtained by searching for the minimum energy state when considering the Hamiltonians $\bar{H}_{A(B)}=-H_{A(B)}$.
Hence, if $\E > \ln \degenE$, where $\degenE$ is the lowest of the degeneracies of the maximum eigenvalues of $H_A$ and $H_B$,  a maximum energy state is given by
\begin{equation}\label{eq: maximal energy entangled state}
\stMax= \frac{1}{\sqrt{\Zmax}} \sum_{i=0}^{N_A-1} e^{\frac{\betae}{2}\prt{A_i + B_{i+\Delta}}} \ket{A_i B_{i+\Delta}},
\end{equation}
where $\Delta = N_B - N_A$, $\Zmax= \sum_{i=0}^{N_A-1} e^{\betae (A_i + B_{i+\Delta})}$, and $\betae$ is the  positive solution of the equation
$(-\betae\partial_{\betae} + 1)\ln \Zmax = \E$.
Similarly to the minimum energy case, the energy of $\stMax$ can be easily calculated as $\Emax=\pdvbp \ln \Zmax$.

The same considerations made for the minimum energy case about the uniqueness of the state hold good here.
If $\E \leq \ln \degenE$,  then the maximum energy is $A_{N_A-1} + B_{N_B-1}$ and a maximum energy pure state can be searched in the eigenspace of the highest possible energy.

We finally observe that the minimization (maximization) process we have developed
can be easily extended to any other couple of local observables.
Indeed, whatever is the local operator $O=O_A+O_B$ we want to minimize (maximize) for an assigned value of entanglement, we can simply assume that $H_X=O_X$.

\textit{Energy-entanglement distribution.} It is worth commenting at this point about the energy distribution of the states corresponding to the same amount of entanglement.
We have made several numerical simulations finding, in all the studied configurations, that the density of states in the proximity of the bounding curves is
extremely low, except for the two-qubit case and highly degenerate cases.
In fact, the main part of the states occupy the intermediate region, and the discrepancy between the peripheral and central densities becomes higher and higher as the dimensionality of the systems increases.
We report here, as an example, the density of states corresponding to two local Hamiltonians having spectra given by $\sigma(H_A)=\prtg{0,2,4}$ and $\sigma(H_B)=\prtg{0,1,6,9}$ in arbitrary units.
In particular, in Fig.~\ref{fig: comparazione1} we show the two curves defining the energy bounds for assigned entanglement and the distribution of a large number of randomly generated pure states~\cite{,Miszczak2011} (the behavior of $\betag$ and $\betae$ is shown in Appendix~\ref{sec: Lowest energy state for a given entanglement}).
It is well visible that the majority of the states lies in the central zone, while none of the generated states is very close to the bounding curves.
This circumstance allows one to better appreciate the relevance of our results since, for example, in an entanglement generation process, one could choose to generate the state $\ketn{\psi_g}$ having the lowest energy for the desired amount of entanglement, instead of any of all the other states which require more energy.
We finally observe that the randomly generated states numerically satisfy the known theoretical expected averages both in entanglement and energy~\cite{Page1993,Foong1994,BookGemmer2009}.

\begin{figure}
	\centering
	\includegraphics[width=0.48\textwidth]{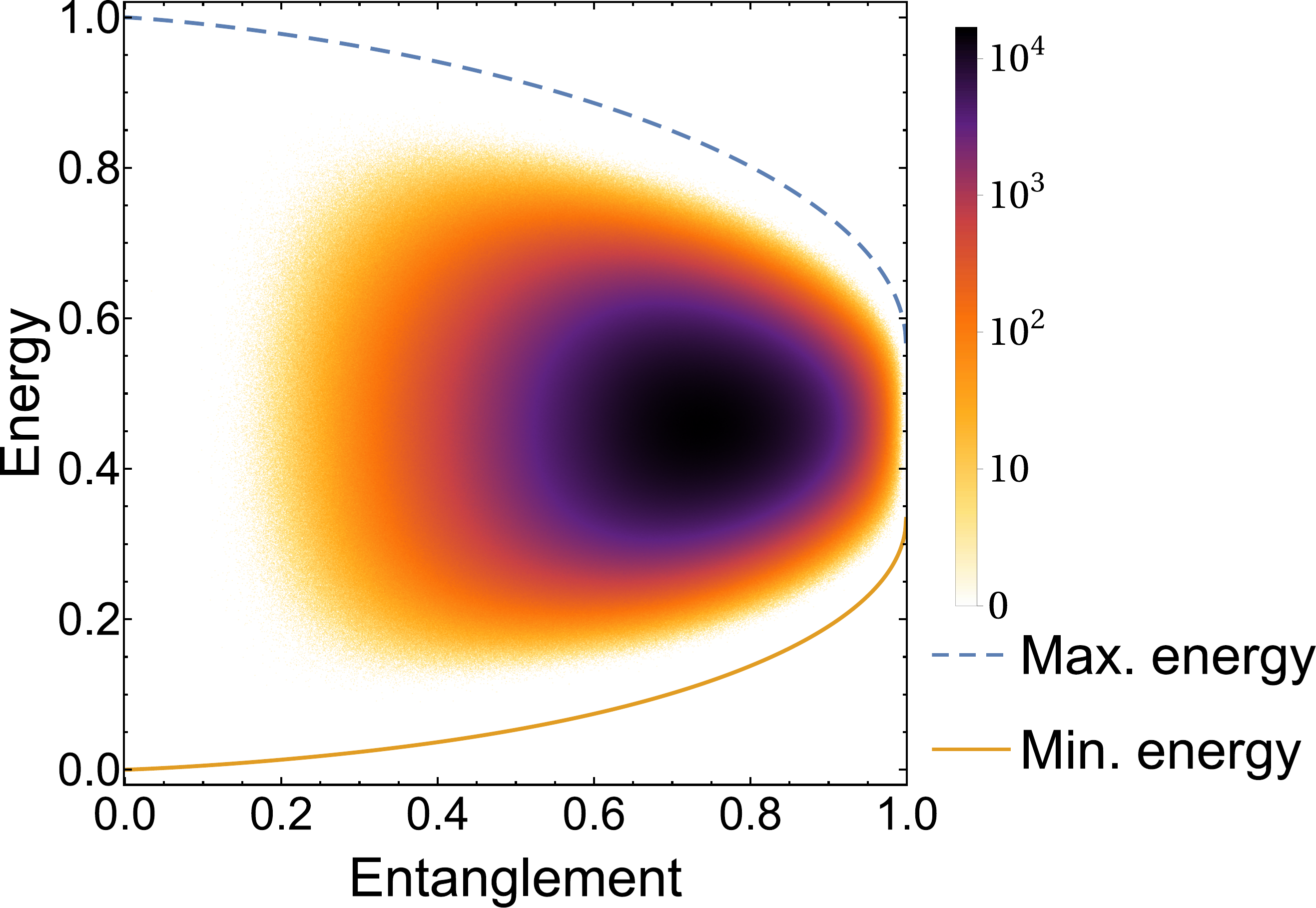}
	\caption{
		Distribution of $10^9$ randomly generated pure states with respect to the entropy of entanglement and the local energy in a $1000\times 1000$ grid. The  Hamiltonians have spectra: $\sigma(H_A)=\prtg{0,2,4}$ and $\sigma(H_B)=\prtg{0,1,6,9}$ in arbitrary units.
		Both the entanglement and the energy are normalized with respect to their maxima.
	}
	\label{fig: comparazione1}
\end{figure}

\textit{Two-qubit system.} Now we apply our general results to the case of two qubits, i.e., to the case $N_A=N_B=2$. By using the purity $P$ [where $P(\rho)= \Tr \{\rho^2\}$] of one of the reduced states  instead of the entropy of entanglement $\E$ as entanglement quantifier, it is possible to obtain through straightforward calculations closed analytical expressions both for the minimum and maximum energy states and for the energy bounds using Eqs.~\eqref{eq: minimal energy entangled state} and \eqref{eq: maximal energy entangled state}. This is possible thanks to the fact that for a two-qubit system the Von Neumann entropy and the purity can be bijectively connected.
Starting from $\stMin=\sqrt{\lambda}\ketn{A_0 B_0}+\sqrt{1-\lambda}\ketn{A_1 B_1}$ and imposing $(1-\lambda)/\lambda = \exp[-\betag(E_1-E_0)]$,
one can easily obtain $\betag = - \prtn{E_1-E_0}^{-1}  \ln \prtqn{(1-\lambda)/\lambda}$, where $\lambda= \prtn{1+\sqrt{2 P-1}}/2$. Analogously, one can find $\betae= \betag$.
Moreover, we can express the energy bounds as $\Emin = \lambda E_0 + (1-\lambda)E_1$ and $\Emax =(1-\lambda) E_0 + \lambda E_1$.

\textit{Mixed states.} We now show that the bounds derived above are still valid even
when we extend the analysis to mixed states.
Contrarily to the pure state case, a standard entanglement quantifier does not exist~\cite{Horodecki2009}.
However, it is in general required that the convexity property is satisfied~\cite{Vidal2000,Plenio2007}, i.e.,
for any arbitrary quantifier $\E_m$
\begin{equation}
\label{eq: stronger LOCC assumption}
\rho = \sum_i p_i \rho_i \implies \E_m (\rho) \leq \sum_i p_i \E_m (\rho_i),
\end{equation}
where $p_i\geq 0\ \forall i$ and $ \sum_i p_i = 1$.
In addition, we make the standard assumption that $\E_m$ applied to pure states is equal to the entropy of entanglement~\cite{Plenio2007}.
In Appendix~\ref{sec: The convexity problem}, we show that this assumption can be relaxed.

Every mixed state can be written as a combination of pure states, $\rho=\sum_i p_i \dyadn{\psi_i}$. Thus, every mixed state has energy equal to $\Tr \prtg{H\rho}=\sum_i p_i \evn{H}{\psi_i}$
and entanglement $\E_m (\rho) \leq \sum_i p_i \E_i$, where $\E_i =\E_m(\ketn{\psi_i})$.
Since one can prove (see Appendix~\ref{sec: The convexity problem}) that the curves $\Emin(\E)$ and $\Emax(\E)$ are, respectively, the former increasing and convex, and the latter decreasing and concave, the following chain of relations holds:
\begin{equation}
\Tr \prtg{H\rho}  \ge \sum_i p_i \Emin(\E_i)
\ge \Emin\prt{\sum_i p_i \E_i} \ge \Emin (\E_m (\rho)).
\end{equation}
Analogously, it holds that $\Tr \prtg{H\rho}\le \Emax (\E_m (\rho))$.

It follows that, in an energy-entanglement graph, every mixed state can be found on a segment that is entirely between the minimum and maximum energy curves. In Fig.~\ref{fig: ConvexityGraph} an example of this situation is clearly shown.

\begin{figure}
	\centering
	\includegraphics[width=0.48\textwidth]{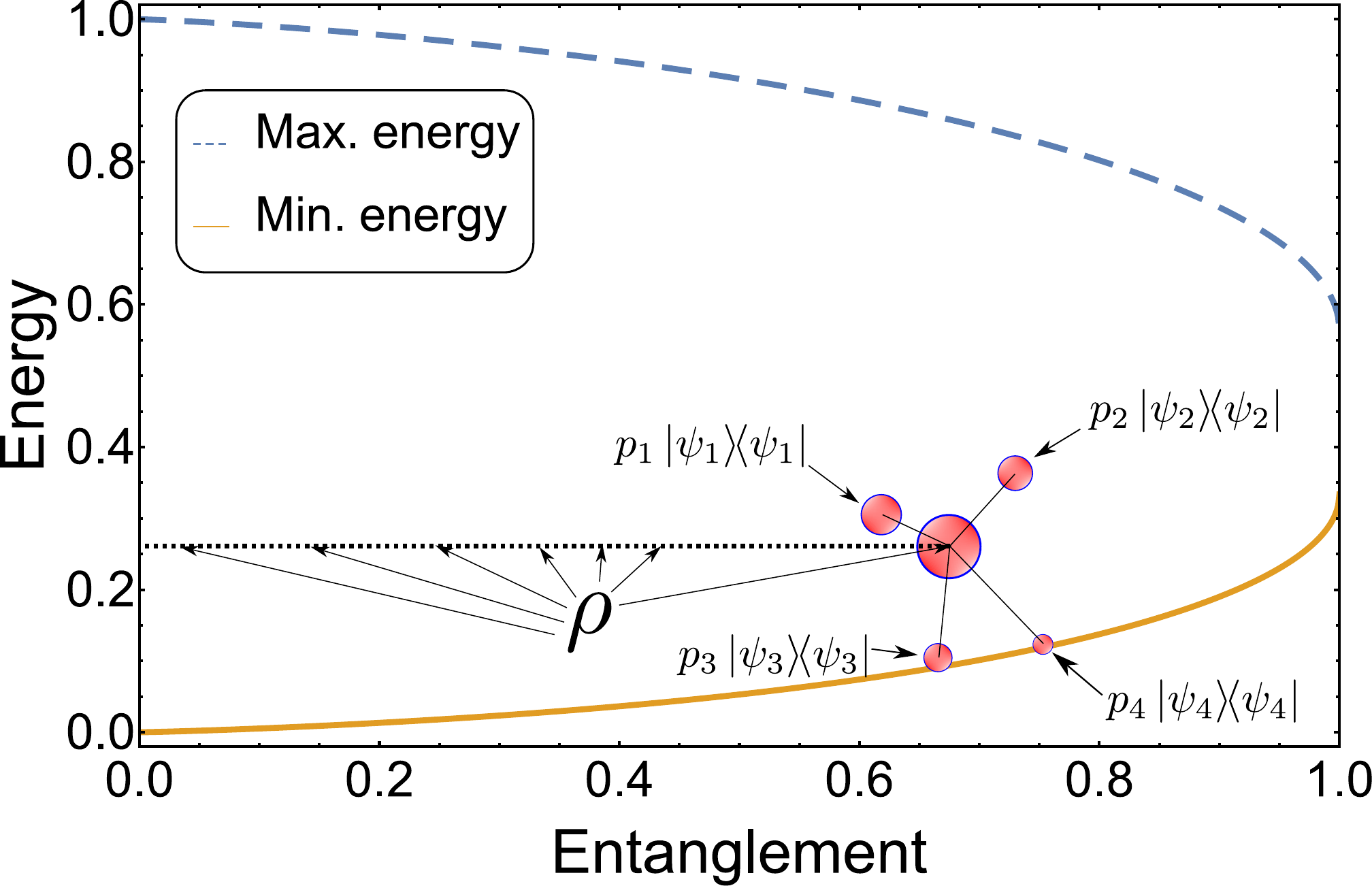}
	\caption{
		Representation of the energy-entanglement relation for a mixed state obtained as a convex sum of four pure states. The assigned energy value (obtained as the average of the energies of the pure states) and the possible values of entanglement (from zero to the average of the entanglement values of the single pure states) identify a segment.
		This segment always lies between the curves of minimum and maximum energy because of their monotonicity and convexity properties.
	}
	\label{fig: ConvexityGraph}
\end{figure}

{\it Connections with thermodynamics, entanglement Hamiltonian, and LOCC.}
The minimum and maximum energy states are characterized by coefficients that can be directly linked to the Boltzmann factors of a fictitious thermal state and, as a consequence, their energy can be calculated through their fictitious partition function.
This is worth mentioning because entanglement and thermodynamics are believed to be conceptually connected in the context of typicality~\cite{Popescu2006,BookGemmer2009} and they have various formal analogies when treated within resource theories such as local operations and classical communication (LOCC) and thermodynamic resource theory (TRT)~\cite{Brandao2008,Horodecki2009,Horodecki2013}.
In fact, a connection with thermodynamics has been also found in the study conducted in Ref.~\cite{Bakhshinezhad2019}. There,  the authors dealt with the problem of creating the maximum amount of correlations (quantified by mutual information) by employing a limited amount of energy, through the application of a unitary operator. They considered non-interacting bipartite systems starting from thermal product states.
In the zero temperature limit, since the mutual information is twice the entropy of entanglement, their problem coincides with our search for the minimum energy states for a given amount of entanglement. Indeed, they have found that to maximize the correlations one has to generate states of the form of Eq.~\eqref{eq: minimal energy entangled state}~\footnote{We note that Ref.~\cite{Bakhshinezhad2019} appeared on the arXiv during the review process}.
Their proof relies on the concept of passive states (states the energy of which cannot be lowered by unitary operations), thus providing an additional link between our results and the field of thermodynamics. 

It is also interesting to consider the limit case of Eq.~\eqref{eq: minimal energy entangled state} when $B_{N_A-1}= B_{N_A-2}= \dots = B_0$. In this case, the reduced state of $A$,  $\rho_g^A\equiv  \Tr_B \{\dyadn{\psi_g}\}$, is equal to
\begin{equation}
\rho_g^A = \frac{1}{\Zmin}\sum_{i=0}^{N_A-1} e^{-\betag A_i} \dyad{A_i},
\quad
\Zmin = \sum_{i=0}^{N_A-1} e^{-\betag A_i},
\end{equation}
which is a thermal state with respect to $H_A$ at temperature $T=1/(k_B \betag)$, where $k_B$ is the Boltzmann constant.
This result can be easily obtained without using  Eq.~\eqref{eq: minimal energy entangled state} since in this limit the problem reduces to find the minimum energy state for a fixed entropy of subsystem $A$.

Our results also present connections with some studies based on the entanglement Hamiltonian formalism.
Indeed, when $N_A=N_B$, the reduced states of $\stMin$ are ($\rho_g^B\equiv \Tr_A \{ \dyadn{\psi_g}\}$)
\begin{equation}
\rho_g^{A(B)} = \frac{1}{\Zmin} e^{-\betag \Tr_{B(A)} \{\tilde{H}\}}.
\end{equation}
Thus, the reduced states have been easily written in the entanglement Hamiltonian formalism (this can be done for  $\rho_g^A$ even when $N_A < N_B$), which has been proved to be useful to get various insights into solid-state physics research \cite{Li2008,Peschel2009,Peschel2011,Laflorencie2016,Dalmonte2018}.
In Appendix~\ref{sec: Many-body systems with minimal energy states as ground states}, we show that some many-body systems of interest are characterized, under appropriate approximations~\cite{Peschel2011}, by ground states belonging to the family of minimal energy states for a given entanglement.

Lastly, we point out that minimum or maximum energy states with respect to the same local Hamiltonians are connected through one-shot LOCC (see Appendix~\ref{sec: Minimum energy states are LOCC-connected} for the proof).
This  has two direct implications. The first one is that, given any pure state $\ketn{\psi}$, one can immediately write down a family of states that are LOCC-connected to it  (see Appendix~\ref{sec: Minimum energy states are LOCC-connected}). 
The second one is that if two distant parties share a minimum energy state having more entanglement than needed,  they can recover, with certainty, the maximum amount of local energy compatible with the needed entanglement.

\textit{Connections to quantum technologies.} Our results are particularly relevant in protocols exploiting partially entangled qudits. Although maximally entangled states are requested in many applications, non-maximally entangled states have been proven useful in quantum technologies, for example in processes involving two-mode squeezed states~\cite{Takei2005,Adesso2014}, in quantum telecloning of qudits~\cite{Gordon2007,Araneda2016}, and in probabilistic quantum teleportation~\cite{Banaszek2000}.
In the last two cases, our results allow one to implement the procedure by exploiting less expensive entangled states, through the direct utilization of minimum energy states or of  Theorem~1 (see Appendix~\ref{sec:Theorem1Application}).

More in general, in the LOCC asymptotic limit, $n$ copies of a state $\ketn{\phi}$ can be converted to $m$ copies of a state $\ketn{\phi'}$ if and only if $n \E (\ketn{\phi}) \geq m \E (\ketn{\phi'})$, with $n,m \rightarrow \infty$, thus making the entropy of entanglement the quantifier of the resource entanglement~\cite{BookNielsen2010}. For example, Bell states can always be obtained by entanglement distillation~\cite{BookNielsen2010}.
In this framework, given a certain amount of energy, it is then particularly relevant that
it is possible to generate more entanglement overall by producing many copies of our minimum energy states with non maximal entanglement (see Appendix~\ref{sec:EntanglementProduction}).

Our results also permit one to identify bounds in the production of pure entangled states within the framework of the TRT, which has recently drawn a lot of attention~\cite{Horodecki2013,Lostaglio_Tesi}.
Its goal is to study what states are reachable through thermal operations given an arbitrary starting state $\rho$ and the environmental temperature $T$.
Since the energy amount of reachable states from the state $\rho$ is bounded, when TRT is equipped with our results, it lets us individuate which are the reachable pure states with the maximum allowed degree of entanglement.
Indeed, allowing the use of catalysts~\cite{Lostaglio_Tesi}, the state we search is one of our minimum energy states with energy equal to $\Tr \{\rho (H_A+H_B)\} - k_B T S (\rho)$.

We have also proven, under the conjecture that our analysis is valid also in the case of discrete Hilbert spaces of infinite dimensions, that two-mode squeezed states are minimum energy states for a given amount of entanglement (see Appendix~\ref{sec: The minimum energy state of two harmonic oscillators is a two-mode squeezed state}).
Therefore, these states, extensively exploited in quantum optics laboratories~\cite{Schumaker1985,Takei2005,Adesso2014,Dutt2015}, are the most energetically convenient states to generate.
In general, a possible way to generate minimum energy states is to exploit dissipative processes leading to a unique steady state, such as simple zero-temperature thermalizations~\cite{BookBreuer2002}.
In this case, choosing a suitable interaction Hamiltonian leads the bipartite system to the desired state, i.e., the ground state.
We give an example of this process for a two-qubit system and for a two-harmonic-oscillator system in Appendix~\ref{sec: Generation of a minimum energy state through zero temperature thermalization}.
Such kind of processes involving a simple thermalization have been studied, for example, in Ref.~\cite{Piccione2019Work}.

\textit{Conclusive remarks.} In summary, we have found the minimum and maximum permitted local energy of an arbitrary finite bipartite system for a given quantity of entanglement,
also reporting the explicit form of a family of minimum and maximum energy states.
Then, we have numerically investigated the energy distribution of entangled pure states, finding, in all the studied configurations, that the probability of randomly generating states with a fixed entanglement close to the energy bounds is extremely low except for the two-qubit case and highly degenerate cases.

Our results can be important in quantum technologies since, given the degree of entanglement necessary for a certain application, our approach allows one to identify a class of states whose generation requires the lowest energy cost. Such an identification appears even more important also in the light of our numerical simulations, showing that the energies of the majority of the states with a fixed entanglement typically lie quite far from the energy bounds.
Finally, we stress that Theorem~1 can bring by itself great practical advantages in optimization problems depending exclusively on the Schmidt coefficients, given some energy  constraints, as discussed in detail in Appendix~\ref{sec:Theorem1Application}.

\textit{Acknowledgements.} N.P. thanks Mauro Paternostro, Andrea Smirne, Jan Sperling,  and Alexander Streltsov for useful discussions about the results of this Rapid Communication.

\appendix

\section{\label{sec: Lowest energy state for a given set of Schmidt coefficients}  Lowest energy state for a given set of Schmidt coefficients}

In this section, we prove the following theorem.

\begin{theorem}
	\label{theorem: minimal energy state for a given set of Schmidt coefficients}
	For any bipartite system with Hamiltonians $H_A$ and $H_B$ of the form of Eq.~$(1)$ of the main text,  given a fixed set of squared Schmidt coefficients $\lambdavec\equiv\prtg{\lambda_i}_{i=0}^{N_A-1}$ with $\lambda_i\leq\lambda_j$ for $ i>j$, no pure state can have less energy than the state
	\begin{equation}
	\label{eq: minimal energy Schmidt state}
	\ket{\psi_{\lambf}}=\sum_{i=0}^{N_A-1} \sqrt{\lambda_i} \ket{A_i B_i}.
	\end{equation}
	Moreover, if both $H_A$ and $H_B$ do not have degeneracies and $\lambda_i < \lambda_j$ for $ i>j$, the above state  is the only pure state with that energy up to phase factors on the basis kets.
\end{theorem}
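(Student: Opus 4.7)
The plan is to reduce the bipartite problem to two independent one-sided minimizations and attack each with a Ky Fan rearrangement inequality. Concretely, any pure state with the prescribed Schmidt coefficients can be written as $\ket{\psi} = \sum_i \sqrt{\lambda_i}\ket{a_i b_i}$ with $\prtg{\ket{a_i}}$ and $\prtg{\ket{b_i}}$ arbitrary orthonormal sets in $\Hil_A$ and $\Hil_B$. The local energy splits as
\[
\ev{H}{\psi} = \sum_{i=0}^{N_A-1} \lambda_i \bra{a_i} H_A \ket{a_i} + \sum_{i=0}^{N_A-1} \lambda_i \bra{b_i} H_B \ket{b_i},
\]
and because the two orthonormal sets can be chosen independently, it suffices to minimize each sum separately over all orthonormal families while keeping the $\lambda_i$ fixed and ordered as $\lambda_0 \ge \lambda_1 \ge \dots \ge \lambda_{N_A-1}\ge 0$.

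For the $A$ term the strategy is Abel summation with the non-negative weights $w_k \equiv \lambda_k - \lambda_{k+1}$ (setting $\lambda_{N_A} \equiv 0$), which gives
\[
\sum_{i=0}^{N_A-1} \lambda_i \bra{a_i} H_A \ket{a_i} = \sum_{k=0}^{N_A-1} w_k \sum_{i=0}^{k} \bra{a_i} H_A \ket{a_i},
\]
combined with Ky Fan's minimum principle: for any orthonormal $\prtg{\ket{a_i}}_{i=0}^{k}$ one has $\sum_{i=0}^{k} \bra{a_i} H_A \ket{a_i} \ge \sum_{i=0}^{k} A_i$. Since all $w_k \ge 0$, reassembling by parts yields $\sum_i \lambda_i \bra{a_i} H_A \ket{a_i} \ge \sum_i \lambda_i A_i$. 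The identical argument on the $B$ side produces the matching bound, and the two lower bounds are saturated simultaneously by the single choice $\ket{a_i} = \ket{A_i}$, $\ket{b_i} = \ket{B_i}$, i.e.\ by the state $\ket{\psi_{\lambf}}$ of Eq.~\eqref{eq: minimal energy Schmidt state}.

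For the uniqueness claim I would track the equality conditions. Strict decrease of $\lambda_i$ makes every weight $w_k$ strictly positive, so equality in the Abel sum forces equality in the Ky Fan bound at every $k$, which in turn means $\mathrm{span}\prtg{\ket{a_0},\dots,\ket{a_k}} = \mathrm{span}\prtg{\ket{A_0},\dots,\ket{A_k}}$ for every $k$. Taking the orthogonal complement within each step pins down $\ket{a_k}$ to be a unit vector in the one-dimensional eigenspace of $A_k$, and nondegeneracy of $H_A$ turns ``proportional'' into ``equal up to a phase''; the same conclusion holds on $B$.

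The main obstacle I anticipate is the clean extraction of the equality conditions from the nested chain of Ky Fan inequalities that the Abel decomposition produces. One must verify that strict monotonicity of the $\lambda_i$ together with nondegeneracy of $H_A$ and $H_B$ is precisely what prevents an optimizer from spreading inside a degenerate eigenspace, since without either hypothesis the bound remains tight but the minimizer is genuinely non-unique (any rotation within a degenerate energy subspace, or any permutation among equal $\lambda_i$, yields another minimum-energy state).
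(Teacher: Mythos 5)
Your proof is correct and follows essentially the same route as the paper: your Ky Fan minimum principle is precisely the paper's Lemma~2 (the ``Projection Lemma''), and your Abel summation with the nonnegative weights $w_k=\lambda_k-\lambda_{k+1}$ is exactly the content of the paper's Lemmas~3 and~4, which the authors establish by an inductive regrouping rather than by invoking summation by parts. The uniqueness argument---strictly positive weights forcing equality in each nested partial-sum inequality, with nondegeneracy then pinning each $\ket{a_k}$, $\ket{b_k}$ down to a phase---likewise mirrors the paper's treatment.
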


The idea is to prove that, being $\ketn{\psi} = \sum_{i=0}^{N_A-1} \sqrt{\lambda_i} \ketn{a_i b_i}$ any other pure state with the same coefficients, it holds
\begin{equation}
\ev{H}{\psi} - \ev{H}{\psi_{\lambf}}= \sum_{i=0}^{N_A-1} \lambda_i \Delta_i \geq 0,
\end{equation}
where $\Delta_i=\evn{H}{a_i b_i} - \evn{H}{A_i B_i}$.
In order to prove our statements, we will need four lemmas.

\begin{lemma}
	\label{lemma: Minim. Somma Decrec. Cresc.}
	Consider two sets of real quantities $\prtg{p_i}_{0}^{N-1}$ and $\prtg{E_i}_{0}^{N-1}$, where $0\leq p_i \leq A \ \forall i$ and $\sum_{i=0}^{N-1} p_i = MA$, with $M \leq N,\ A\in \mathbb{R}^+$.
	Then,
	\begin{equation}\label{eq: lemma2a}
	\sum_{i=0}^{N-1} p_i E_i
	\geq
	\sum_{i=0}^{N-1} p_i^{\downarrow} E_i^{\uparrow}
	\geq
	A \sum_{i=0}^{M-1} E_i^{\uparrow},
	\end{equation}
	where $\prtg{p_i^{\downarrow}}_{0}^{N-1}$ is the set $\prtg{p_i}_{0}^{N-1}$ with the elements put in decreasing order and $\prtg{E_i^{\uparrow}}_{0}^{N-1}$ is the set $\prtg{E_i}_{0}^{N-1}$ with the elements put in increasing order.
	Moreover, if the set $\prtg{E_i}_{0}^{N-1}$ has no repeated values and at least one $p_i>0$ with $i\geq M$, then
	\begin{equation}
	\sum_{i=0}^{N-1} p_i E_i > A \sum_{i=0}^{M-1} E_i^{\uparrow}.
	\end{equation}
\end{lemma}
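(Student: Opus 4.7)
The plan is to prove the chain in two separate stages and then revisit it to extract the strict version. The first link $\sum_i p_i E_i \ge \sum_i p_i^{\downarrow} E_i^{\uparrow}$ is the classical rearrangement inequality: among all pairings of the multisets $\{p_i\}$ and $\{E_i\}$, the smallest sum is the anti-sorted one. A short self-contained proof via adjacent transpositions works, and I would include it for completeness since the whole lemma is elementary.

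For the second link I would use summation by parts. Introduce the tails $T_j = \sum_{i=j}^{N-1} p_i^{\downarrow}$ of the decreasingly sorted weights, so that $T_0 = MA$, $T_N = 0$, and $p_i^{\downarrow} = T_i - T_{i+1}$. Because each $p_i^{\downarrow}\in[0,A]$ with total $MA$, the partial sums obey $\sum_{i<j} p_i^{\downarrow} \le \min(jA,MA)$, equivalently $T_j \ge (M-j)^{+}A$. Abel summation gives
\[
\sum_{i=0}^{N-1} p_i^{\downarrow} E_i^{\uparrow} = MA\,E_0^{\uparrow} + \sum_{i=1}^{N-1} T_i\bigl(E_i^{\uparrow} - E_{i-1}^{\uparrow}\bigr),
\]
and since every gap is non-negative I may substitute $T_i \mapsto (M-i)^{+}A$ term by term. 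An identical Abel expansion of $A\sum_{j=0}^{M-1} E_j^{\uparrow}$ produces exactly the same expression, which closes the bound. Equivalently, one can read this off as the linear program $\min\sum q_i E_i^{\uparrow}$ over $q_i\in[0,A]$ with $\sum q_i = MA$, whose vertex minimum places weight $A$ on the $M$ slots of smallest $E_i^{\uparrow}$.

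For strictness, the distinctness of the $E_i$ promotes every gap $E_i^{\uparrow}-E_{i-1}^{\uparrow}$ to being strictly positive. Equality in the Abel step would then force $T_i = (M-i)^{+}A$ for every $i$, i.e.\ $p_0^{\downarrow}=\dots=p_{M-1}^{\downarrow}=A$ with the remaining $p_i^{\downarrow}=0$; equality in the rearrangement step would force the pairing $(p_i,E_i)$ to be anti-sorted, which for distinct $E_i$ pins down a unique assignment. The hypothesis that some $p_i>0$ with $i\ge M$ is incompatible with at least one of these equality cases, so the combined inequality is strict. The one subtle point is that neither of the two links has to be strict on its own, so one must argue jointly; the cleanest packaging is via uniqueness of the LP minimizer for distinct $E_i$, from which the strict bound drops out immediately.
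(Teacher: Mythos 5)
Your proof is correct, and while the first inequality is handled exactly as in the paper (adjacent transpositions realizing the rearrangement inequality), your route to the second inequality and to strictness is genuinely different. The paper proves $\sum_i p_i^{\downarrow}E_i^{\uparrow}\ge A\sum_{i=0}^{M-1}E_i^{\uparrow}$ by an iterative term-merging procedure: at each step it either absorbs the smallest weight into the largest one (when $p_0+p_{N-1}<A$) or splits off a term $A E_0$ and carries the excess $p_0+p_{N-1}-A$ forward, and then argues that exactly $M$ such extractions occur because the total weight is $MA$. Your Abel-summation argument, using the tail bounds $T_j\ge (M-j)^{+}A$ forced by $0\le p_i\le A$ and $\sum_i p_i=MA$ together with the non-negativity of the gaps $E_i^{\uparrow}-E_{i-1}^{\uparrow}$, establishes the same majorization-type fact in one step and avoids the bookkeeping needed to justify that the paper's extraction step fires exactly $M$ times. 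Your handling of strictness is also tighter: the paper simply asserts that under the hypotheses at least one step of its iteration must be strict, whereas you correctly flag that neither link of the chain need be strict on its own and resolve this jointly via the uniqueness of the minimizer of the linear program $\min\sum_i q_iE_i^{\uparrow}$ over $q_i\in[0,A]$, $\sum_i q_i=MA$, when the $E_i$ are distinct. One shared caveat: as literally stated, the strict claim presupposes that the indices of the $p_i$ refer to the increasingly ordered energies (otherwise a positive $p_i$ with $i\ge M$ could sit on a small $E_i$ and equality could occur); the paper's proof makes the same tacit assumption, and it holds in the only place the lemma is invoked, namely Lemma~\ref{lemma: Projection Lemma}, where $E_i\le E_j$ for $i<j$.
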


\begin{proof}
	In the first part of the proof we show that
	$\sum_{i=0}^{N-1} p_i E_i \geq \sum_{i=0}^{N-1} p_i^{\downarrow} E_i^{\uparrow}$.
	Without loss of generality, we can first put the $p_i$ in decreasing order and continue to use $E_i$ to indicated the elements of the permutated set.
	Then, if $E_n < E_m$, with $m<n$ we have
	\begin{equation}
	\sum_{i=0}^{N-1} p_i^{\downarrow}  E_i
	\geq
	\sum_{i\neq n,m}^{N-1} p_i^{\downarrow}  E_i + p_m^{\downarrow}  E_n + p_n^{\downarrow}  E_m.
	\end{equation}
	The possibility of iterating this procedure concludes the first part of the proof.
	
	For the second part of the proof, we consider the two sets already in the correct order  (decreasing for the $p_i$ and increasing for the $E_i$) and we avoid the arrows to lighten the notation. We consider the following iterative procedure.
	If $p_0 + p_{N-1} < A$, we write
	\begin{equation}
	\label{eq: lowering passage}
	\sum_{i=0}^{N-1} p_i E_i
	\geq
	\sum_{i=1}^{N-2} p_i E_i + (p_0+p_{N-1}) E_0=
	\sum_{i=0}^{N-2} q_i E_i,
	\end{equation}
	and repeat the procedure with the new set $\prtg{q_i}_{0}^{N-2}$, where $q_i=p_i\ \forall\ 1\leq i \geq N-2$ and $q_0 = p_0 + p_{N-1}$.
	Also, notice that the above inequality becomes strict if $E_{N-1}> E_0$ and $p_{N-1}>0$.
	Otherwise, we have $p_0 + p_{N-1} = A + \tilde{p}_{N-1} \ge A$ and
	\begin{equations}
		\label{eq: term A passage}
		&\sum_{i=0}^{N-1} p_i E_i\\
		&=\sum_{i=1}^{N-2} p_i E_i + p_0 E_0 + \prt{A-p_0}E_{N-1}+\tilde{p}_{N-1}E_{N-1} \\
		&\geq
		\sum_{i=1}^{N-2} p_i E_i + A E_0 + \tilde{p}_{N-1} E_{N-1} =
		A E_0 + \sum_{i=1}^{N-1} \tilde{p}_i E_i,
	\end{equations}
	where, in the new set $\prtg{\tilde{p}_i}_{1}^{N-1}$, $\tilde{p}_i=p_i\ \forall\ 1\leq i \leq N-2$ and we recall that $\tilde{p}_{N-1} = p_0 + p_{N-1} - A$.
	In this case, the next step has to be done on $\sum_{i=1}^{N-1} \tilde{p}_i E_i$.
	
	The step represented by Eq.~\eqref{eq: lowering passage} conserves the sum of the $p_i$, i.e., $\sum_{i=0}^{N-1} p_i = \sum_{i=0}^{N-2} q_i$, while in Eq.~\eqref{eq: term A passage}, $\sum_{i=0}^{N-1} p_i = \sum_{i=1}^{N-1} \tilde{p}_i + A$.
	At each step of the procedure, we lose an element of the sum. Thus, at the end of the procedure, only the terms $A E_i$ produced by the steps such as the one of Eq.~\eqref{eq: term A passage} survive.
	The number of times this kind of step takes place is equal to $M$ because of the hypothesis $\sum_{i=0}^{N-1} p_i = MA$ and, in the end, we will get $A \sum_{i=0}^{M-1} E_i$, which concludes the proof of the validity of Eq.~\eqref{eq: lemma2a}.
	The particular case of no degeneracies in the set $\prtg{E_i}_{0}^{N-1}$ and at least one $p_i>0$ with $i\geq M$ follows by considering that, in this case, at least one passage of Eq.~\eqref{eq: lowering passage} or of Eq.~\eqref{eq: term A passage} with the strict inequality has to be performed.
\end{proof}

\begin{lemma}
	\label{lemma: Projection Lemma}
	Consider an Hamiltonian of the form $H=\sum_{i=0}^{N-1} E_i \dyadn{E_i}$, where $E_i \leq E_j$ for $ i < j$ and a set of orthonormal vectors on the same Hilbert space $\prtgn{\ketn{a_i}}_{i=0}^{M-1}$.
	It holds that
	\begin{equation}
	\sum_{i=0}^{M-1} \ev{H}{a_i} \geq \sum_{i=0}^{M-1} E_i \quad \forall M \leq N.
	\end{equation}
	Moreover, if the spectrum of the Hamiltonian is non degenerate, the equality sign is obtained if and only if we can write $\ketn{a_i} = \sum_{n=0}^{M-1} \alpha_{i, n}\ketn{E_n}\ \forall i$.
\end{lemma}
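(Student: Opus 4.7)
The plan is to reduce the claim to Lemma~\ref{lemma: Minim. Somma Decrec. Cresc.} by rewriting the left-hand side as a weighted sum over the eigenvalues $E_n$ with constrained weights. Expanding each $\ket{a_i}$ in the eigenbasis of $H$, one has
\begin{equation}
\sum_{i=0}^{M-1} \ev{H}{a_i} = \sum_{n=0}^{N-1} p_n E_n,\quad p_n \equiv \sum_{i=0}^{M-1} \mq{E_n}{a_i},
\end{equation}
so the problem becomes bounding a weighted sum of the $E_n$ with weights $p_n$ whose structure must now be understood.

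The key observation I would make is that $p_n = \ev{P}{E_n}$, where $P = \sum_{i=0}^{M-1}\dyad{a_i}$ is an orthogonal projector thanks to the orthonormality of the $\ket{a_i}$. Being $P$ a rank-$M$ projector, one has $0 \leq p_n \leq 1$ and $\sum_{n=0}^{N-1} p_n = \Tr\prtg{P} = M$. Hence the family $\prtg{p_n}_{n=0}^{N-1}$ satisfies exactly the hypotheses of Lemma~\ref{lemma: Minim. Somma Decrec. Cresc.} with $A = 1$, and that lemma immediately yields $\sum_{n=0}^{N-1} p_n E_n \geq \sum_{n=0}^{M-1} E_n$, which is the desired inequality.

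For the equality case under the non-degeneracy assumption I would invoke the strict-inequality clause of the same lemma: if any $p_n$ with $n \geq M$ is strictly positive, the bound is strict. Equality therefore forces $p_n = 0$ for every $n \geq M$, i.e.\ $\ip{E_n}{a_i} = 0$ whenever $n \geq M$ and $i < M$, which says exactly that each $\ket{a_i}$ lies in $\mathrm{span}\prtg{\ket{E_0},\dots,\ket{E_{M-1}}}$. The converse is immediate: if the $M$ orthonormal $\ket{a_i}$ all belong to this $M$-dimensional subspace they form an orthonormal basis of it, so $P = \sum_{n=0}^{M-1}\dyad{E_n}$ and $\sum_{i=0}^{M-1}\ev{H}{a_i} = \Tr\prtg{PH} = \sum_{n=0}^{M-1} E_n$.

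I do not anticipate a serious obstacle, since the combinatorial work is already packaged in Lemma~\ref{lemma: Minim. Somma Decrec. Cresc.}. The only delicate point is to keep track of its strict-inequality version in order to extract the ``only if'' direction of the equality characterization under the non-degeneracy hypothesis.
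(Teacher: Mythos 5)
Your proposal is correct and follows essentially the same route as the paper: expand $\sum_i \ev{H}{a_i}$ as $\sum_n p_n E_n$ with $p_n=\sum_i\mq{E_n}{a_i}$, check $0\le p_n\le 1$ and $\sum_n p_n=M$, and invoke Lemma~\ref{lemma: Minim. Somma Decrec. Cresc.} (with its strict clause for the equality case). Your projector formulation of the weights is just a cleaner packaging of the same bounds, not a different argument.
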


\begin{proof}
	Let us start by considering that
	\begin{equation}
	\sum_{i=0}^{M-1} \ev{H}{a_i} =
	\sum_{n=0}^{N-1} p_n E_n,
	\end{equation}
	where $p_n = \sum_{i=0}^{M-1} \mq{E_n}{a_i} \leq 1$.
	Moreover, $\sum_{n=0}^{N-1} p_n = M$.
	Then, because of lemma \ref{lemma: Minim. Somma Decrec. Cresc.}
	\begin{equation}
	\label{eq: strict lemma 3}
	\sum_{i=0}^{N-1} p_n E_n \geq \sum_{n=0}^{M-1} E_n,
	\end{equation}
	which concludes the first part of the proof.
	The second part follows by considering that if a ket $\ketn{a_i}$ exists such that it cannot be obtained as a linear combination of the first $M$ energy eigenvectors, then at least one $p_n>0$ with $n\geq M$ exists. Then, because of lemma \ref{lemma: Minim. Somma Decrec. Cresc.} the above inequality is strict.
	On the other hand, if the conditions on the kets $\ketn{a_i}$ are valid, the equality sign of Eq.~\eqref{eq: strict lemma 3} is trivially obtained.
\end{proof}

Let us now consider a set of $D$ real numbers  $\prtg{\Delta_i}_{0}^{D-1}$  such that $\sum_{i=0}^{N-1} \Delta_i \geq 0, \quad \forall N\leq D$. The following lemmas (\ref{lemma: Series composition} and \ref{lemma: Strict series composition}) are valid.

\begin{lemma}
	\label{lemma: Series composition}
	Given a set of real non-negative numbers $\lambda_i $ such that  $\lambda_i\leq\lambda_j$ for $i>j$ then
	\begin{equation}
	\sum_{i=0}^{N-1} \lambda_i \Delta_i \geq 0, \quad \forall N\leq D.
	\end{equation}
\end{lemma}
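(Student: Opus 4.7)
The plan is to prove this by Abel summation (summation by parts), which is the natural tool whenever one has a hypothesis on partial sums and wants to rewrite a weighted sum so that those partial sums appear explicitly.

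Concretely, I would introduce the partial sums $S_k = \sum_{i=0}^{k} \Delta_i$ for $k = 0, 1, \dots, D-1$, together with the convention $S_{-1} = 0$. By hypothesis $S_k \geq 0$ for every $k \leq D-1$. Writing $\Delta_i = S_i - S_{i-1}$ and rearranging, I obtain the identity
\begin{equation}
\sum_{i=0}^{N-1} \lambda_i \Delta_i = \lambda_{N-1} S_{N-1} + \sum_{i=0}^{N-2} (\lambda_i - \lambda_{i+1}) S_i ,
\end{equation}
which is the standard Abel transformation. This is the key computational step; once it is in place, the conclusion is immediate.

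From this expression, non-negativity follows term by term: the boundary term $\lambda_{N-1} S_{N-1}$ is non-negative because $\lambda_{N-1} \geq 0$ and $S_{N-1} \geq 0$, while each summand $(\lambda_i - \lambda_{i+1}) S_i$ is non-negative because the ordering hypothesis $\lambda_i \leq \lambda_j$ for $i > j$ is exactly the statement that the sequence $\{\lambda_i\}$ is non-increasing, so $\lambda_i - \lambda_{i+1} \geq 0$, and again $S_i \geq 0$. Summing non-negative contributions gives the claim for the chosen $N$, and since $N \leq D$ was arbitrary the lemma follows.

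There is no real obstacle here: the only delicate point is making sure the Abel rearrangement uses the monotonicity of the $\lambda_i$ in the correct direction (decreasing, so that the successive differences $\lambda_i - \lambda_{i+1}$ have the right sign) and the non-negativity of the $\lambda_i$ to handle the boundary term. Both are provided by the hypotheses, so the proof is essentially a one-line calculation once the summation-by-parts identity is written down.
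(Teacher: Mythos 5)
Your proof is correct. The Abel (summation-by-parts) identity
\begin{equation}
\sum_{i=0}^{N-1} \lambda_i \Delta_i = \lambda_{N-1} S_{N-1} + \sum_{i=0}^{N-2} \left(\lambda_i - \lambda_{i+1}\right) S_i, \qquad S_k=\sum_{i=0}^{k}\Delta_i,\ S_{-1}=0,
\end{equation}
is verified by a direct index shift, and the standing hypothesis on the $\Delta_i$ (namely $\sum_{i=0}^{N-1}\Delta_i\geq 0$ for all $N\leq D$) is exactly the statement $S_k\geq 0$ for $k=0,\dots,D-1$, so every term on the right-hand side is non-negative given that the $\lambda_i$ are non-negative and non-increasing. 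This is a genuinely different route from the paper's proof, which proceeds by induction on the number of terms: when the last increment $\Delta_M$ is negative it is absorbed into the preceding term at the cost of replacing $\lambda_M$ by the larger weight $\lambda_{M-1}$, and this regrouping is iterated until either a non-negative partial block is reached or, in the worst case, everything collapses to $\lambda_0\sum_{i=0}^{M}\Delta_i\geq 0$. The two arguments rest on the same two ingredients (monotonicity of the weights and non-negativity of the partial sums), and the paper's iterative absorption is essentially a step-by-step unrolling of the telescoping hidden in your identity; what your version buys is the elimination of the case analysis and of the induction, reducing the lemma to a one-line termwise inspection, which also makes it immediately clear where strictness can be extracted (any single strictly positive term suffices), the content of the paper's subsequent Lemma 4.
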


\begin{proof}
	We show this by induction.
	Obviously, $\lambda_0 \Delta_0 \geq 0$.
	Suppose that the lemma is true for $M<D$, that is
	\begin{equation}
	\sum_{i=0}^{N-1} \lambda_i \Delta_i \geq 0, \quad \forall N\leq M < D.
	\end{equation}
	We have to analyze
	\begin{equation}
	\label{eq: lemma series composition condition}
	\lambda_0 \Delta_0 + \dots \lambda_{M-1} \Delta_{M-1} + \lambda_{M} \Delta_{M} \geq 0.
	\end{equation}
	If $\Delta_M$ is non-negative, the result is trivial.
	If $\Delta_M$ is negative we have
	\begin{multline}
	\label{eq: lemma 4 dis1}
	\lambda_0 \Delta_0 + \dots + \lambda_{M-1} \Delta_{M-1} + \lambda_{M} \Delta_{M}
	\geq\\
	\geq
	\underbrace{\lambda_0 \Delta_0 + \dots}_{\textrm{True}} + \lambda_{M-1} (\Delta_{M-1} + \Delta_{M}).
	\end{multline}
	If $(\Delta_{M-1} + \Delta_{M})\geq 0$, Eq.~\eqref{eq: lemma series composition condition} is satisfied.
	Otherwise, we go on and consider
	\begin{equation}
	\label{eq: lemma 4 dis2}
	\underbrace{\lambda_0 \Delta_0 + \dots}_{\textrm{True}} + \lambda_{M-2} (\Delta_{M-2} + \Delta_{M-1} + \Delta_{M}) \geq 0.
	\end{equation}
	In the worst case we arrive at
	\begin{equation}
	\lambda_0 \prt{\sum_{i=0}^{M} \Delta_i} \geq 0,
	\end{equation}
	which is true by hypothesis.
	This concludes the proof.
\end{proof}

\begin{lemma}
	\label{lemma: Strict series composition}
	Given a set of real positive numbers $\lambda_i $ such that  $\lambda_i < \lambda_j$ for $i>j$, if there exists $ \ i_0 = \min_{i}\prtg{ i : \Delta_{i}>0}$ then
	\begin{equation}
	\sum_{i=0}^{N-1} \lambda_i \Delta_i > 0, \quad \forall \ \prt{i_0 +1} \leq N \leq D.
	\end{equation}
\end{lemma}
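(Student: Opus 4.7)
The plan is to mirror the telescoping argument of Lemma~\ref{lemma: Series composition}, but to repackage it via Abel summation, which makes the source of strictness immediate. The strict inequality $\lambda_i<\lambda_j$ for $i>j$ produces a positive gap $\lambda_i-\lambda_{i+1}>0$ at every index, so it will suffice to exhibit a single Abel summand that is strictly positive in order to upgrade Lemma~\ref{lemma: Series composition} from a weak to a strict inequality.

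The first step is to pin down the behaviour of $\Delta_i$ for $i<i_0$. By minimality of $i_0$ we have $\Delta_j\le 0$ for all $j<i_0$, while by hypothesis the partial sums $S_k:=\sum_{j=0}^{k}\Delta_j$ are non-negative. Iterating these two constraints inductively in $j$ forces $\Delta_j=0$ for every $j<i_0$, and hence $S_{i_0-1}=0$ together with $S_{i_0}=\Delta_{i_0}>0$. Summation by parts then yields the identity
\[
\sum_{i=0}^{N-1}\lambda_i\Delta_i \;=\; \lambda_{N-1}\,S_{N-1} + \sum_{i=0}^{N-2}(\lambda_i-\lambda_{i+1})\,S_i,
\]
in which every $S_k$ is non-negative by hypothesis, each factor $(\lambda_i-\lambda_{i+1})$ is strictly positive by the strict monotonicity of the $\lambda_i$, and $\lambda_{N-1}>0$.

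It remains to split on $N$. If $N=i_0+1$ the Abel sum reduces to the boundary term $\lambda_{N-1}S_{N-1}=\lambda_{i_0}\Delta_{i_0}$, which is strictly positive. If instead $N\ge i_0+2$ then the index $i=i_0$ lies in the range of the remaining sum and contributes $(\lambda_{i_0}-\lambda_{i_0+1})S_{i_0}>0$, while every other summand is still non-negative. In both cases the total is bounded below by a strictly positive quantity, proving the lemma. The only real obstacle is the bookkeeping of the edge case $N=i_0+1$ and resisting the temptation to rerun the induction of Lemma~\ref{lemma: Series composition}; once the Abel identity is in hand, strictness is read off at once from the fact that precisely at the index $i_0$ both the multiplier and the partial sum are strictly positive.
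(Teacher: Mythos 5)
Your proof is correct, and it takes a genuinely different route from the paper's. The paper proves strictness by re-running the iterative grouping argument of Lemma~\ref{lemma: Series composition} (repeatedly absorbing $\lambda_{M}\Delta_M$ into $\lambda_{M-1}(\Delta_{M-1}+\Delta_M)$, etc.) and observing that at some stage the inequality used becomes strict; this is economical but leaves the location of the strictness somewhat implicit. Your Abel-summation identity
\begin{equation*}
\sum_{i=0}^{N-1}\lambda_i\Delta_i \;=\; \lambda_{N-1}S_{N-1} \;+\; \sum_{i=0}^{N-2}\left(\lambda_i-\lambda_{i+1}\right)S_i ,
\qquad S_k=\sum_{j=0}^{k}\Delta_j ,
\end{equation*}
makes every summand manifestly non-negative under the standing hypothesis $S_k\ge 0$, re-derives Lemma~\ref{lemma: Series composition} as a byproduct (there only $\lambda_i-\lambda_{i+1}\ge 0$ and $\lambda_{N-1}\ge 0$ are needed), and pinpoints the strictness exactly at the index $i_0$, where $S_{i_0}\ge\Delta_{i_0}>0$ meets either the positive boundary weight $\lambda_{i_0}$ (case $N=i_0+1$) or the positive gap $\lambda_{i_0}-\lambda_{i_0+1}$ (case $N\ge i_0+2$). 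Two minor remarks: your preliminary step showing $\Delta_j=0$ for $j<i_0$ is not actually needed, since $S_{i_0}=S_{i_0-1}+\Delta_{i_0}>0$ follows directly from $S_{i_0-1}\ge 0$, and the terms with $i<i_0$ are non-negative regardless; and in the case $N=i_0+1$ the sum does not literally reduce to the boundary term unless you invoke that step, though the lower bound you need holds either way. Neither affects the validity of the argument.
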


\begin{proof}
	Of course, $\lambda_{i_0}\Delta_{i_0} > 0$.
	Then, we can repeat the reasoning of the proof of lemma \ref{lemma: Series composition}, keeping into account that when one inequality such as that of Eq.~\eqref{eq: lemma 4 dis1}   has to be considered, it will be a strict inequality.
\end{proof}

Now we are ready to prove theorem~\ref{theorem: minimal energy state for a given set of Schmidt coefficients}.
\begin{proof}
	Let be $\ketn{\psi} = \sum_{i=0}^{N_A-1} \sqrt{\lambda_i} \ketn{a_i b_i}$ an arbitrary pure state with the given set of Schmidt coefficients.
	Let us calculate:
	\begin{equation}
	\ev{H}{\psi} - \ev{H}{\psi_{\lambf}}= \sum_{i=0}^{N_A-1} \lambda_i \Delta_i,
	\end{equation}
	where
	\begin{equations}
		\Delta_i
		&=\ev{H}{a_i b_i} - \ev{H}{A_i B_i}\\
		&=\ev{H_A}{a_i}-A_i +\ev{H_B}{b_i} - B_i.
	\end{equations}
	Because of lemma \ref{lemma: Projection Lemma},
	\begin{equation}
	\sum_{i=0}^{N-1} \Delta_i \geq 0, \ \ \forall \ N \leq N_A .
	\end{equation}
	Then, because of lemma \ref{lemma: Series composition}, $\sum_{i=0}^{N_A-1} \lambda_i \Delta_i \geq 0$ and the first part of the theorem is proven.
	
	For the second part of the theorem, because of lemma~\ref{lemma: Strict series composition}, the energy of the arbitrary state $\ketn{\psi}$ is equal to the energy of $\ketn{\psi_{\lambf}}$ if and only if $\Delta_i = 0$ for each $i$.
	Starting from $\Delta_0$, the only way to make it zero is to set $\ketn{a(b)_0}=\ketn{A(B)_0}$, up to phase factors, because $A(B)_0$ is the lowest eigenvalue.
	Then, the only way to set $\Delta_1 = 0$ is to set $\ketn{a(b)_1}=\ketn{A(B)_1}$, up to phase factors, because this ket has to be orthogonal to $\ketn{a(b)_0}$.
	The continuation of this reasoning leads to the conclusion of the proof.
\end{proof}

\section{\label{sec: Lowest energy state for a given entanglement}    Lowest energy state for a given entanglement}

Here, we prove the main result of the main text, i.e., to find one pure state of minimum energy for a fixed amount of entanglement, quantified by the entropy of entanglement $\E (\ketn{\psi})$.

Because of the results obtained in the previous section, we know that one minimum energy pure state can be searched among pure states of the form of Eq.~\eqref{eq: minimal energy Schmidt state}.
A bijection between these states and a specific set of diagonal density matrices exists up to phase factors on the kets $\ketn{A_i B_i}$:
\begin{equation}
\label{eq: bijection APP}
\ket{\psi_{\lambf}} \longleftrightarrow
\tilde{\rho}_{\lambf} = \sum_{i=0}^{N_A-1} \lambda_i \dyad{A_i B_i}.
\end{equation}
It also holds:
\begin{equations}
	\ev{(H_A + H_B)}{\psi_{\lambf}} &= \Tr \prtg{\tilde{H}\tilde{\rho}_{\lambf}}, 
	\\
	\E (\ket{\psi_{\lambf}})= S(\tilde{\rho}_{\lambf}) & = -\sum_{i=0}^{N_A-1} \lambda_i \ln \lambda_i,
\end{equations}
where $\tilde{H}=\sum_{i=0}^{N_A-1} E_i\dyadn{A_i B_i}$, using the shorthand notation $E_i=A_i+B_i$, and  $S(\rho)=-\Tr \prtg{\rho \ln \rho}$.
Thus, the problem of minimizing the energy of $\ketn{\psi_{\lambf}}$ with respect to the sets $\lambf=\prtg{\lambda_i}_{i=0}^{N_A-1}$ such that $\E (\ketn{\psi_{\lambf}})= \E$ is equivalent to find the diagonal density matrix $\tilde{\rho}_{\lambf}$ that minimizes energy when its entropy $S(\tilde{\rho}_{\lambf})=\E$ is fixed.

Let us suppose that the ground energy level of $\tilde{H}$ is degenerate with degeneration $\degen$.
Then, trivially, when the entanglement is lower or equal to $\ln \degen$, the minimum energy is $E_0$ and a minimum energy  state has to be found inside the degenerate subspace of energy $E_0$.
In the other case, i.e., $\E > \ln \degen$, we show in the following that the solution is given by
\begin{equation}
\label{eq: Minimum energy density matrix APP}
\tilde{\rho}_{g} = \frac{e^{-\betag \tilde{H}}}{\Zmin}, \qq{where} \Zmin=\Tr \prtg{e^{-\betag \tilde{H}}},
\end{equation}
and $\betag$ is the non-negative solution of the equation:
\begin{equation}
\label{eq: Thermal solution APP}
\prt{-\betag \pdv{\betag} + 1}\ln \Zmin = \E.
\end{equation}

Suppose that it exists a state $\sig$ with the same entropy of $\tilde{\rho}_g$ but lower energy.
To each state $\rho$ we can associate a functional formally equivalent to the free energy:
\begin{equation}
F(\rho,\tilde{H},\betag) = \Tr \prtg{\tilde{H}\rho} - \frac{1}{\betag}S(\rho).
\end{equation}
Then, it holds
\begin{equation}
F(\sig,\tilde{H},\betag) - F(\tilde{\rho}_g,\tilde{H},\betag) =  \Tr \prtg{\tilde{H}\sig} - \Tr \prtg{\tilde{H}\tilde{\rho}_g} <0,
\end{equation}
but also, as $\tilde{\rho}_g$ is the thermal state with inverse temperature $\betag$,
\begin{equation}
F(\sig,\tilde{H},\betag) - F(\tilde{\rho}_g, \tilde{H},\betag) = \frac{1}{\betag} S(\sig||\tilde{\rho}_g) > 0,
\end{equation}
where $S(\rho ||\sigma)= \Tr \prtg{\rho\prt{\ln \rho - \ln \sigma}}$, which is always positive for $\sigma \neq \rho$~\cite{BookBreuer2002}.
Therefore, $\sig$ cannot exist.

It is straightforward to prove that Eq.~\eqref{eq: Thermal solution APP} always has a unique non-negative solution, because the entropy of entanglement  is a continuous and strictly decreasing function of $\betag$  [see Eq.~\eqref{eq:derivEnt} of the following section], and, moreover, the thermal state assumes the minimum and maximum values of entropy in the two limit cases $\betag=0$ ($\E =\ln N_A$) and $\betag\rightarrow\infty$ ($\E\rightarrow \ln \degen$).

On the basis of Eqs.~\eqref{eq: bijection APP} and \eqref{eq: Minimum energy density matrix APP}, we finally get:
\begin{equation}
\label{eq: minimal energy entangled state APP}
\stMin=\frac{1}{\sqrt{\Zmin}}\sum_{i=0}^{N_A-1} e^{-\frac{\betag}{2}E_i} \ket{A_i B_i},
\end{equation}
as one possible minimum energy state. Note that a whole family of pure states with this amount of local energy and this entropy of entanglement can be easily constructed from Eq.~\eqref{eq: minimal energy entangled state APP} by multiplying kets by single phase factors:
\begin{equation}
\frac{1}{\sqrt{\Zmin}}\sum_{i=0}^{N_A-1} e^{-\frac{\betag}{2}E_i} e^{i \alpha_i} \ket{A_i B_i}, \qq{with} \alpha_i \in [0,2\pi[.
\end{equation}
Because of the  theorem~\ref{theorem: minimal energy state for a given set of Schmidt coefficients}, we can say that this family comprehends all the lowest energy pure states if the two local Hamiltonians $H_A$ and $H_B$ have no degeneracies (we recall  that $\lambda_i= \exp\prtg{-\betag E_i/2}$, therefore the condition $\lambda_i > \lambda_j$ for $i<j$ is satisfied).

If at least one of the Hamiltonians has degeneracies, other states are valid.
For example, suppose that $A_m = A_{m+1}$.
Then, the state obtained from Eq.~\eqref{eq: minimal energy entangled state APP} by swapping $\ketn{A_m}$ with $\ketn{A_{m+1}}$ is still a lowest energy state.
In general, every state that can be obtained from the state of Eq.~\eqref{eq: minimal energy entangled state APP} through the application of energy preserving unitary operators of the form $U_A \otimes U_B$ is one of the lowest energy states.

The search for the maximum energy bound and the corresponding maximum energy state can be easily performed analogously to the minimum energy case, by considering the Hamiltonians $\bar{H}_{A(B)}=-H_{A(B)}$. The results are reported in the main text in terms of the parameter $\betae$. In general $\betag$ and $\betae$ are different. However, it is possible to show that $\betag=\betae$ when the local spectra eigenvalues are symmetric with respect to a rotation (i.e., a real constant $C$ exists such that the spectra $\sigma \{H_{A(B)}\} = -\sigma \{H_{A(B)}\} + C $).
This symmetry is automatically satisfied in the case of two qubits examined in the main text.

We finally observe that, in general, the solution of Eq.~\eqref{eq: Thermal solution APP} can be easily computed numerically.
In Fig.~\ref{fig: grafBeta1}, we plot $\betag$ and $\betae$ as functions of $\E$, for the specific system (a simple $3\times4$ system) analyzed in the main text in the part energy-entanglement distribution.

\begin{figure}
	\centering
	\includegraphics[width=0.48\textwidth]{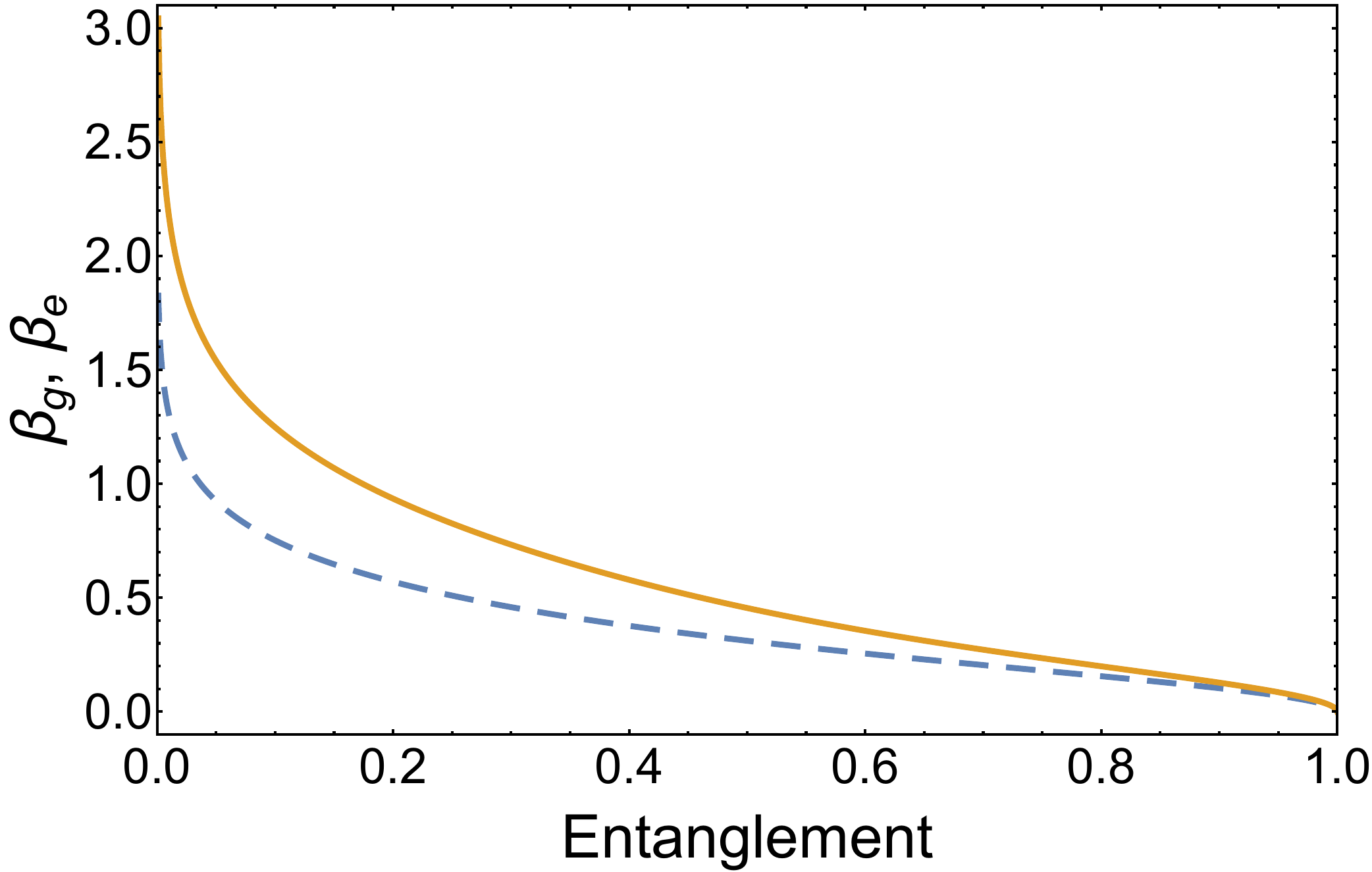}
	\caption{$\betag$  (orange solid line) and $\betae$ (blue dashed line) as functions of the entropy of entanglement (normalized), in the inversed arbitrary units of the local Hamiltonians,  whose spectra are: $\sigma(H_A)=\prtg{0,2,4}$ and $\sigma(H_B)=\prtg{0,1,6,9}$.}
	\label{fig: grafBeta1}
\end{figure}

\section{\label{sec: The convexity problem}   The convexity problem}

As explained in the main text, we need to prove that the lowest energy as a function of the entanglement, $\E$, is monotonically increasing and convex (and analogous properties for the highest energy).
In the following, $\ev{H}= \evn{H}{\psi_g}=\Emin$ and all the other expectation values refer to the state $\stMin$ as a function of $\betag$.
Moreover, we consider $\betag > 0$ as the case $\betag=0$ is obtained only in the extremal case $\E = \ln N_A$.
Then, in the following we consider $\ln \degen < \E < \ln N_A$.
First of all, we calculate
\begin{equation}
\pdv{\betag} \ev{H}= \frac{\prt{\pdvbg \Zmin}^2}{\Zmin^2} - \frac{\pdvbg^2 \Zmin}{\Zmin}.
\end{equation}
One can easily check that
\begin{equation}
\frac{\pdvbg^n \Zmin}{\Zmin} = (-1)^n \ev{H^n}.
\end{equation}
Then,
\begin{equation}
\pdv{\betag} \ev{H}= \ev{H}^2 - \ev{H^2} < 0,
\end{equation}
since the states $\stMin$ are never eigenstates of $H$  for the consider interval of $\E$. If we derive Eq.~\eqref{eq: Thermal solution APP} with respect to $\betag$ we thus obtain
\begin{equation} \label{eq:derivEnt}
\pdvbg \E = \betag \pdvbg \ev{H} < 0.
\end{equation}
This means that $\E(\betag)$ is invertible for $\E > \ln \degen$.
Then, we can use the theorem on the derivative of the inverse function to show that $\ev{H}$ is a monotone function of $\E$:
\begin{equation}\label{eq:1/beta}
\pdve \ev{H} = \pdvbg \ev{H} \times \pdve \betag
= \frac{\pdvbg \ev{H}}{\pdvbg \E}=\frac{1}{\betag}  > 0.
\end{equation}

We can now calculate the second derivative of $\ev{H}$ respect to $\E$:
\begin{equation}
\pdve^2 \ev{H}
=\pdve \prt{\frac{1}{\betag}}
=-\frac{1}{\betag^2} \times \pdve \betag > 0.
\end{equation}

Concerning the maximum energy decreasing monotonicity and concavity with respect to the entanglement, they follow from the fact that the maximum energy is found as the minimum for the Hamiltonian $-(H_A + H_B)$.

We finally observe that all our analysis  extends to other entanglement quantifiers with the following properties:
\begin{enumerate}
	\item they are in a one-to-one relation with the entropy of entanglement when the domain of application is restricted to pure states;
	\item they have the convexity property (see main text);
	\item the minimum (maximum) energy curve is monotonically increasing (decreasing) and convex (concave).
\end{enumerate}
In particular, the first property is enough to prove the main result of Eq.~\eqref{eq: minimal energy entangled state APP}. Concerning the extension of the analysis to the case of mixed states, all these requirements together are a weaker version of the standard assumption made in the main text \cite{Plenio2007}.

This claim becomes intuitive if one thinks that all these requirements let one make the same reasoning done in the main text, with the substitution of the quantifier $\E$ with $\E_m$ when applied to pure states, i.e., calculating the energy curves with respect to $\E (\E_m)$.
Indeed, considering as new quantifier $\E_m$ a bijective function of the quantifier $\E$ is equivalent to apply a transformation along the $x$-axis to the energy-entanglement graphs (such as those of the main text).
Therefore, the results on the mixed states obtained in the main text still apply if this transformation conserves the convexity and concavity of the two energy curves (i.e., the point 3 above) and the chain of relations of Eq.~(12) of the main text becomes:
\begin{equations}
	\Tr \prtg{H\rho}  &\ge \sum_i p_i \Emin(\E (\E_i))   \\
	&\ge \Emin\prt{\sum_i p_i \E (\E_i) } \ge \Emin ( \E(\E_m (\rho))).
\end{equations}

\section{\label{sec: Many-body systems with minimal energy states as ground states}Many-body systems with minimal energy states as ground states}

In this section, we prove that some relevant many-body systems have ground states (obtained by making certain assumption and approximations) belonging to the family of minimum energy states for a fixed degree of entanglement.
Following Ref.~\cite{Peschel2011}, we consider a bipartite system with Hamiltonian $H_T = H_A + H_B + H_I$, where $H_I$ is the interaction Hamiltonian between subsystems $A$ and $B$ and we suppose that $H_I \gg H_A + H_B$, so that the local Hamiltonian $H_A + H_B$ can be treated as a perturbation.
If $H_I$ has a non-degenerate ground state $\ketn{\psi_0}$, perturbation theory gives
\begin{equation}
\ket{\psi_0'} = \ket{\psi_0} - \sum_{k=1}^{N_A N_B - 1} \ket{\psi_k} \frac{\mel{\psi_k}{\left(H_A + H_B\right)}{\psi_0}}{V_k-V_0},
\end{equation}
where $\ketn{\psi_0'}$ is the perturbed ground eigenstate of $H_T$ at first order, $\ketn{\psi_k}$ are the eigenstates of $H_I$ and $V_k$ the eigenvalues.

Every full-rank density matrix~\footnote{A full-rank density matrix has every eigenvalue strictly higher than zero.}
can be written as $\rho = \exp(-\eH)$, where $\eH$ is called \enquote{entanglement Hamiltonian}~\cite{Peschel2011,Laflorencie2016}.
In Ref.~\cite{Peschel2011}, the following result has been proved.
If:
\begin{enumerate}
	\item   there is only one  positive energy $\Delta$  such that  $\mel{\psi_k}{\left(H_A + H_B\right)}{\psi_0} = 0$
	except when $V_k - V_0 = \Delta$,
	\item $\mel{\psi_k}{H_A}{\psi_0} = \mel{\psi_k}{H_B}{\psi_0}$ for each $k$,
	\item $\Tr_{B} \{\dyad{\psi_0}\} \propto I$ (this requires $N_A \leq N_B $ and implies that $\ket{\psi_0}$ is maximally entangled),
\end{enumerate}
then
\begin{equation}\label{eq:HamiltonianEntanglementGroundState}
\rho_A' =\Tr_{B} \{\dyad{\psi_0}\} = \frac{e^{-\eH_A}}{Z} =\frac{1}{Z} \exp(-\frac{4}{\Delta}H_A),
\end{equation}
where $Z=\Tr_A{e^{-\eH_A}}$ assures correct normalization.

The systems analyzed in Ref.~\cite{Peschel2011}, indeed, satisfy the above conditions, but also have the same dimension and $H_A = H_B$.
Therefore, a minimum energy state on these systems takes the form:
\begin{equation}
\stMin=\frac{1}{\sqrt{\Zmin}}\sum_{i=0}^{N_A-1} e^{-\betag A_i} \ket{A_i B_i},
\end{equation}
since $E_i = A_i + B_i = 2 A_i$.
The reduced state of system $A$ is then given by 
\begin{equation}
\rho_A  = \frac{1}{\Zmin} \exp(-2 \betag H_A).
\end{equation}
By comparing the above equation with Eq.~\eqref{eq:HamiltonianEntanglementGroundState}, one sees that the ground states  of these systems are minimum energy states with $\betag = 2/\Delta$.
This can be seen also using in Eq.~(14) of the main text the fact that $\Tr_{B(A)} \{\tilde{H}\} = 2 H_{A(B)}$.

More in general, whenever the first $N_A$ eigenvalues of Hamiltonian $H_B$ are simply connected with the eigenvalues of $H_A$, it is easy to write down the entanglement Hamiltonian form of subsystem $A$ only in terms of $H_A$ when the system is in a minimum energy state.
For example, when $H_B = \alpha H_A$, the entanglement Hamiltonian is given by $\eH_A = \betag (1+\alpha) H_A$ [see Eq.~(14) of the main text].

\section{\label{sec: Minimum energy states are LOCC-connected}   Minimum (maximum) energy states are LOCC-connected}
In this section, we show that minimum energy states are connected by LOCC operations and that the same holds for maximum energy states.
Nielsen's theorem states that a pure state $\ketn{\psi}$ can be converted to a pure state $\ketn{\phi}$ through LOCC if and only if $\lambf^{\psi} \prec \lambf^{\phi}$ (one says that $\lambf^{\phi}$ majorizes $\lambf^{\psi}$), where $\lambf^\psi$ is the vector of the eigenvalues of one of the reduced state of a subsystem for a given state of the total bipartite system~\cite{BookNielsen2010}.
In other words, putting the eigenvalues in non-increasing order, for every $N \leq N_A$
\begin{equation}
\sum_{j=0}^{N-1} \lambda^{\psi}_i
\leq
\sum_{j=0}^{N-1} \lambda^{\phi}_i.
\end{equation}
We now show that the above majorization condition is satisfied between two minimum energy states [see Eq.~\eqref{eq: minimal energy entangled state APP}] when the starting one has not less entanglement than the final one.
If the two minimum states have the same entanglement, then they have the same parameter $\betag$. In this case the majorization condition is trivially satisfied.
Then, let us analyze the case when the two states have a different $\betag$.

We define as follows the sum of the first $N$ elements of the vector $\lambf^{\psi_g} (\betag)$:
\begin{equation}
\lambda_N (\betag) = \frac{\sum_{i=0}^{N-1} e^{-\betag E_i}}{\Zmin},
\qq{where} E_i= A_i + B_i.
\end{equation}
Then, we can show that $\pdvb \lambda_N (\betag) \geq 0\ \forall N$:
\begin{widetext}
	\begin{equations}
		\Zmin^2& \pdvbg \lambda_N (\betag)
		=
		\prt{\sum_{i=0}^{N-1} e^{-\betag E_i}}\prt{\sum_{j=0}^{N_A-1} E_j e^{-\betag E_j}}
		-\prt{\sum_{j=0}^{N_A-1} e^{-\betag E_j}} \prt{\sum_{i=0}^{N-1} E_i e^{-\betag E_i}}\\
		&=
		\prt{\sum_{i=0}^{N-1} e^{-\betag E_i}}\prt{\sum_{j=0}^{N-1} E_j e^{-\betag E_j}+\sum_{j=N}^{N_A-1} E_j e^{-\betag E_j}}
		-\prt{\sum_{j=0}^{N-1} e^{-\betag E_j}+\sum_{j=N}^{N_A-1} e^{-\betag E_j}} \prt{\sum_{i=0}^{N-1} E_i e^{-\betag E_i}}\\
		&=
		\prt{\sum_{i=0}^{N-1} e^{-\betag E_i}}\prt{\sum_{j=N}^{N_A-1} E_j e^{-\betag E_j}}
		-\prt{\sum_{j=N}^{N_A-1} e^{-\betag E_j}} \prt{\sum_{i=0}^{N-1} E_i e^{-\betag E_i}}=
		\sum_{i=0}^{N-1} \sum_{j=N}^{N_A-1} \prt{E_j-E_i}e^{-\betag\prt{E_i+E_j}} \geq 0,
	\end{equations}
\end{widetext}
where the last equality holds because $E_j \geq E_i$ by definition.
This  means that $\betag > \betag' \implies \lambda_N (\betag) \geq \lambda_N (\betag')\ \forall N$, thus proving the majorization condition (the equality sign holds only when $E_0=E_1=\dots=E_{N_A-1}$).
We recall that $\betag$ increases as the entanglement decreases.
Analogously, one can show that maximum energy states can be obtained from one another through LOCC if the starting state has not less entanglement than the final one.

This result has an immediate consequence: for any arbitrary state of the bipartite system one can easily individuate a family of LOCC-connected states including the state at hand.
Every state can be written as follows
\begin{equation}\label{eq:StartingStateLOCC}
\ket{\psi} = \sum_{i=0}^{r-1} \sqrt{\lambda_i} \ket{a_i b_i}, 
\quad \lambda_0 \geq \lambda_1 \geq \dots \lambda_{r-1} > 0,
\end{equation}
where we used the Schmidt decomposition of $\ketn{\psi}$ having Schmidt rank $r$.
Since Nielsen's theorem does not depend on the space containing the two states we want to connect by LOCC  \cite{BookNielsen2010}, we can consider two local Hermitian operators $O_A$ and $O_B$ of dimension $r$ such that:
\begin{equation}\label{eq:MinStatePOV}
\ket{\psi} = \frac{1}{\sqrt{Z_O}}\sum_{i=0}^{r-1} e^{-\frac{\betag}{2} \prt{a_i + b_i}} \ket{a_i b_i},
\end{equation}
where $a_i(b_i)$ are the eigenvalues of the local operator $O_{A(B)}$ with eigenstates $\ketn{a(b)_i}$, and $Z_O$ is a normalization factor.
These Hermitian operators have eigenstates determined by the Schmidt decomposition of $\ketn{\psi}$, but their eigenvalues are only constrained by the condition
\begin{equation}
a_i + b_i =-\frac{1}{\betag}\prt{\ln \lambda_i + \ln Z_O}.
\end{equation}
By also requiring that $O_A$ and $O_B$  have non-decreasing eigenvalues, the state $\ket{\psi}$ can be regarded as a minimum state, in an Hilbert space of dimension $r^2$, with respect to the average of an entire family of operators $O = O_A + O_B$.
It is easy to see that the same state can be considered as a maximum state with respect to the average of operators $O'=-O$. 

Since we have shown that minimum and maximum energy states are, respectively, LOCC-connected among themselves, this also holds  for the states minimizing and maximizing the average of $O$ or $O'$. It follows that, for any given pure state, one can immediately write down a family of states that are connected to it.
For example, starting from the state of Eq.~\eqref{eq:StartingStateLOCC}, corresponding to a given $\betag$, one can obtain through LOCC the following states
\begin{equation}
\label{eq:LOCCfamily}
\ket{\psi' (\betagp)} =   \frac{1}{\sqrt{Z_O}}\sum_{i=0}^{r-1} e^{-\frac{\betagp}{2} \prt{a_i + b_i}} \ket{a_i b_i},
\end{equation}
where $\betagp > \betag$.
On the other hand, the state of Eq.~\eqref{eq:StartingStateLOCC} can be obtained by the above states with $\betagp < \betag$.

The family of states LOCC-connected to an arbitrary state is unique (up to local unitary operators).
Indeed, changing the value of $\betag$ or adding additive constants to the Hermitian operators $O_A$ and $O_B$ does not change the family of reachable states [see Eqs.~\eqref{eq:MinStatePOV} and \eqref{eq:LOCCfamily}].
Moreover, seeing the state $\ketn{\psi}$ as a maximum state determines the exact same family of LOCC-connected states as seeing it as a minimum state since every minimum state of Eq.~\eqref{eq:LOCCfamily} can be seen as a maximum state since $O'=-O$.

\section{\label{sec:Theorem1Application}
	Increasing the energy efficiency of quantum protocols exploiting partial entanglement}
In this section, we show how Theorem~\ref{theorem: minimal energy state for a given set of Schmidt coefficients} can be used to increase the energy efficiency of certain protocols.

In most quantum protocols exploiting partially entangled pure states, the quality of the protocols only depends on the Schmidt coefficients of the entangled states used, where the quality of the protocol is quantified by  quantities such as the fidelity of the result obtained with respect to the desired one or the success probability of the protocol.
For a fixed set of Schmidt coefficients, Theorem~\ref{theorem: minimal energy state for a given set of Schmidt coefficients} provides states having the lowest local energy.
Therefore, for a given protocol with some possible contraints, once the Schmidt coefficients maximizing the quality of the protocol are given, Theorem~\ref{theorem: minimal energy state for a given set of Schmidt coefficients} naturally applies allowing one to allocate the least possible amount of energy on the two subsystems.

For example, in Ref.~\cite{Banaszek2000}, Alice wants to teleport a qudit of dimension $d$ to Bob using a partially entangled state which she shares with Bob.
Their shared state is given by
\begin{equation}
\ket{\psi} = \sum_{i=0}^{d-1} \sqrt{\lambda_i} \ket{a_i b_i},
\end{equation}
where $0\le \sqrt{\lambda_{N_A-1}}\le \dots \le \sqrt{\lambda_1}\le \sqrt{\lambda_0}\le 1$.
The optimal mean fidelity of the quantum teleportation is given by \cite{Banaszek2000}
\begin{equation}
\bar{f} = \frac{1}{d+1}\prtq{1 + \prt{\sum_{i=0}^{d-1} \sqrt{\lambda_i}}^2}.
\end{equation}
Among the states at fixed entropy of entanglement, maximizing the fidelity selects some sets of Schmidt coefficients. Theorem~\ref{theorem: minimal energy state for a given set of Schmidt coefficients} provides the states with the lowest energy for each of these sets. In particular, if we call  the Schmidt coefficients belonging to an optimal set, $\sqrt{\gamma_i}$, where \mbox{$i<j \implies \gamma_i \geq \gamma_j$}.
Then, the state having the lowest local energy with these Schmidt coefficients is
\begin{equation}
\ket{\psi_\textup{Opt}} = \sum_{i=0}^{d-1} \sqrt{\gamma_i} \ket{A_i B_i}.
\end{equation}

More in general, Theorem~\ref{theorem: minimal energy state for a given set of Schmidt coefficients} can greatly simplify maximization problems involving energy bounds since, for the states identified by the theorem, it allows to associate to every squared Schmidt coefficient $\lambda_i$ the energy $E_i=A_i + B_i$. Indeed, using the same example as before, if, for instance, the energy of the shared state provided to Alice and Bob is bounded from above $E_c$, the optimization problem reads
\begin{equation}
\begin{cases}
\ev{\prt{H_A+H_B}}{\psi} \leq E_c,\\
\max_{\ket{\psi}} \bar{f} (\ket{\psi}),
\end{cases}
\end{equation}
while, using Theorem~\ref{theorem: minimal energy state for a given set of Schmidt coefficients} it can be cast in the simplified form
\begin{equation} \label{eq:energyoptimizationproblem}
\begin{cases}
\sum_{i=0}^{d-1} \lambda_i E_i \leq E_c,\\
\max_{\lambf} \bar{f} (\lambf).
\end{cases}
\end{equation}
The search in Eq.~\eqref{eq:energyoptimizationproblem} is much simpler since it is limited to the minimum energy states (for fixed $\lambf$)  selected by Theorem~\ref{theorem: minimal energy state for a given set of Schmidt coefficients}.

\section{\label{sec:EntanglementProduction}    Producing more entanglement with less energy}

Here, we show how producing partially entangled minimum energy states can increase the production of the resource entanglement with respect to the energy spent for generating it, in the LOCC asymptotic limit.

Suppose that we have at our disposal a great number of copies of systems $A$ and $B$ in their ground states with Hamiltonians, respectively, $H_A$ and $H_B$.
We want to increase the production of entanglement between systems $A$ and $B$ with respect to the energy we have to provide to them.
For example, let us analyse the same case considered in Fig.~\ref{fig: grafBeta1} of Appendix~\ref{sec: Lowest energy state for a given entanglement} and Fig.~\ref{fig: comparazione1} of the main text: systems $A$ and $B$ have Hamiltonians with spectra given by $\sigma(H_A)=\prtg{0,2,4}$ and $\sigma(H_B)=\prtg{0,1,6,9}$ in arbitrary units and, initially, all the systems are in their ground state.
To generate a maximally entangled state, we need to give to the bipartite system at least $13/3$ of energy in arbitrary units.
On the other hand, a minimum energy state with energy equal to one half of the latter ($13/6$) has an entanglement equal to roughly $\simeq 0.861$ times the maximal one.
Therefore, creating two states of minimal energy with energy equal to $13/6$ we provide the systems with the same amount of energy but generate about $72\%$ more of the resource entanglement.

We can turn this into a maximization problem.
When both $H_A$ and $H_B$ have degeneracies in their lowest eigenvalue, the problem is trivial since there are entangled states with ground state energy.
Generating $n$ entangled bipartite minimum energy states with entanglement $\E$ costs $E_T= n \Emin (\E)$ of energy (we set $A_0=B_0=0$ for simplicity) and the total amount of entanglement generated is
\begin{equation}
\E_T = n \E = E_T \frac{\E}{\Emin (\E)}.
\end{equation}
Therefore, we want to maximize the ratio $\E/\Emin$.
Using Eq.~\eqref{eq:1/beta} we find
\begin{equation}
\pdv{\E} \prt{\frac{\E}{\Emin}} =
\frac{1}{\Emin}\prtq{1-\frac{\E}{\betag \Emin}} <0,\ \forall \E > 0,
\end{equation}
since $\E = \betag \Emin+\ln[\Zmin]$ and $\Zmin (\E)> 1$ [see Eq.~\eqref{eq: Thermal solution APP} and use $\Emin = - \pdvbg \ln{\Zmin}$].
Therefore, one can generate  more entanglement with the same amount of disposable energy by producing many copies of minimum energy states with lower energy.

\section{Two-mode squeezed states  of two harmonic oscillators}\label{sec: The minimum energy state of two harmonic oscillators is a two-mode squeezed state}

In this section, we show that two-mode squeezed states of two harmonic oscillators belong to the family of our minimum energy states for a given entanglement [see Eq.~\eqref{eq: minimal energy entangled state APP}].
In the main text, we conjecture that our main result on the minimum energy states holds good even for discrete infinite systems.
Then, if this conjecture is correct,  two-mode squeezed states are minimum energy states for a given entanglement.

Consider two harmonic oscillators with Hamiltonians $H_A = \hbar\omega_A \ad a$ and $H_B = \hbar\omega_B \bd b$, where $\omega_{A(B)}$ is their frequency and $\ad (\bd)$ and $a (b)$ are the usual creation and annihilation operators.
In this case, the state of Eq.~\eqref{eq: minimal energy entangled state APP} takes the form (up to phase factors):
\begin{equation}
\stMin=\sqrt{1-e^{-\betag \hbar\omega}}\sum_{n=0}^{\infty} e^{-\frac{\betag \hbar\omega}{2}n} \ket{n_A n_B},
\end{equation}
where $\omega = \omega_A+\omega_B$ and $\ketn{n_{A(B)}}$ are number states in the Fock basis.

A two-mode squeezed state is obtained by applying the following unitary operator on a vacuum state~\cite{Schumaker1985,Adesso2014}:
\begin{equation}\label{eq:squeezing}
U_\textup{sq} = \exp[r\prt{e^{-i\phi}a b -e^{i\phi} \ad \bd}],
\quad r >0,\ \phi \in [0,2\pi[,
\end{equation}
which, by naming $\ketn{\psi_\textup{sq}} = U_\textup{sq} \ketn{0_A 0_B}$, leads to
\begin{equation}
\ket{\psi_\textup{sq}} =
\frac{1}{\cosh(r)} \sum_{n=0}^{\infty} \prtq{-e^{i\phi}\tanh(r)}^n \ket{n_A n_B}.
\end{equation}
We recall that this kind of states are also Gaussian state and are often used in quantum optics laboratories for various tasks, usually exploiting their entanglement~\cite{Adesso2014}.
They can be generated, for instance, through four-wave mixing optical parametric oscillator~\cite{Dutt2015}.

One can easily check that (up to phase factors)
\begin{equation}
\stMin=\ket{\psi_\textup{sq}} \iff \betag \hbar\omega = - \ln\prtq{\tanh[2](r)}.
\end{equation}
Then, if our conjecture holds good, every two-mode squeezed state is also a minimum energy state for a couple of harmonic oscillators.

\section{Dissipative generation of minimum energy states} \label{sec: Generation of a minimum energy state through zero temperature thermalization}

{\it Two-Qubits.---}Suppose to have two qubits with Hamiltonian
\begin{equation}
H_{S} = \frac{\hbar \omega_{A}}{2}\sigma_z^{A}+\frac{ \hbar\omega_{B}}{2}\sigma_z^{B},
\end{equation}
where $\omega_{A(B)}$ is their frequency, $\sigma_z^{A(B)} \ketn{1_{A(B)}} = \ketn{1_{A(B)}}$, and $\sigma_z^{A(B)} \ketn{0_{A(B)}} = -\ketn{0_{A(B)}}$.
The minimum energy state of this system can be easily generated through dissipation assisted techniques.
Assume that one can switch on a suitable interaction so that the new Hamiltonian $H=H_{S}+H_I $ is given by:
\begin{equation}
H=\frac{ \hbar \omega_{A}}{2}\sigma_z^{A}+\frac{ \hbar \omega_{B}}{2}\sigma_z^{B}+\frac{ \hbar g}{2}\prt{\sigma_x^A \sigma_x^B - \sigma_y^A \sigma_y^B}.
\end{equation}
Let us define
\begin{equation}
\omega = \frac{\omega_A + \omega_B}{2} \qq{and}
\delta = \frac{\omega_A - \omega_B}{2}.
\end{equation}
Then, the matrix of the Hamiltonian, in the basis $\prtgn{\ket{1_A 1_B},\ket{1_A 0_B},\ket{0_A 1_B},\ket{0_A 0_B}}$, is:
\begin{equation}
H =\hbar \mqty(
\omega & 0 & 0 & g \\
0 & \delta & 0 & 0 \\
0 & 0 & -\delta & 0 \\
g & 0 & 0 & -\omega \\
).
\end{equation}

The eigenstates and the corresponding eigenvalues are:
\begin{equations}
	&\stMax = \sin \gamma \ket{0_A 0_B}+\cos \gamma \ket{1_A 1_B} , \quad \lambda_e = \hbar \sqrt{\omega^2 + g^2},\\
	&\ket{\psi_+} = \ket{1_A 0_B}, \quad \lambda_+ = \hbar \delta,\\
	&\ket{\psi_-} = \ket{0_A 1_B}, \quad \lambda_- = -   \hbar \delta,\\
	&\stMin =\cos \gamma \ket{0_A 0_B} -\sin \gamma \ket{1_A 1_B} , \: \lambda_g = - \hbar \sqrt{\omega^2 + g^2},
\end{equations}
where $\tan \gamma = \frac{g}{\omega + \sqrt{\omega^2 + g^2}}$. The state $\stMin$ has always the lowest energy. By properly choosing $g$, one can obtain any linear combination (up to phase factors) of $\ketn{1_A 1_B}$ and $\ketn{0_A 0_B}$ desired, i.e., one can obtain a minimum energy state at any given entanglement.

{\it Two-mode squeezed states.---}Here, we reconsider the two-harmonic-oscillator system of Sec.~\ref{sec: The minimum energy state of two harmonic oscillators is a two-mode squeezed state}, showing that two-mode squeezed states can be generated through dissipation.
A two-mode squeezed state can be obtained via application of the unitary squeezing operator, defined in Eq.~\eqref{eq:squeezing}, to the vacuum state of the two oscillators: $U_\textup{sq}(r,\phi)\ketn{0,0}$. It is easy to see that such a state is the ground state of the Hamiltonian $U_\textup{sq}(r,\phi) (H_A + H_B) U_\textup{sq}^\dag (r,\phi)$ which is equal, up to a constant, to
\begin{equation}
H_{(r,\phi)} = \hbar\prtq{  \tilde{\omega}_A \ad a + \tilde{\omega}_B \bd b + g \prt{e^{-i\phi} a b + e^{i \phi} \ad \bd } },
\end{equation}
with $\tilde{\omega}_A = \omega_A \cosh[2](r) + \omega_B \sinh[2](r)$, $\tilde{\omega}_B = \omega_B \cosh[2](r) + \omega_A \sinh[2](r)$, and $g = (\omega_A + \omega_B)\sinh(2r)/2$.
Therefore, the original squeezed state can be obtained as the result of a zero-temperature thermalization when the system is described by the Hamiltonian $H_{(r,\phi)}$. To implement this process, one needs to properly tune the oscillator frequencies and generate the required  interaction term. 


%

\end{document}